\title{Optimal ancilla-free Clifford+$V$ approximation of
  $z$-rotations}
\author{\begin{tabular}{c}
    Neil J. Ross \\[.5ex]
    \normalsize Department of Mathematics and Statistics \\
    \normalsize Dalhousie University
  \end{tabular}
}
\date{}
\begin{document}

\maketitle

\begin{abstract}
  We describe a new efficient algorithm to approximate $z$-rotations
  by ancilla-free Clifford+$V$ circuits, up to a given precision
  $\epsilon$. Our algorithm is optimal in the presence of an oracle
  for integer factoring: it outputs the shortest Clifford+$V$ circuit
  solving the given problem instance. In the absence of such an
  oracle, our algorithm is still near-optimal, producing circuits of
  $V$\!-count $m + O(\log(\log(1/\epsilon)))$, where $m$ is the
  $V$\!-count of the third-to-optimal solution. A restricted version
  of the algorithm approximates $z$-rotations in the Pauli+$V$ gate
  set. Our method is based on previous work by the author and Selinger
  on the optimal ancilla-free approximation of $z$-rotations using
  Clifford+$T$ gates and on previous work by Bocharov, Gurevich, and
  Svore on the asymptotically optimal ancilla-free approximation of
  $z$-rotations using Clifford+$V$ gates.
\end{abstract}

\section{Introduction}
\label{sec-intro}

\subsection{The synthesis problems}
\label{subsec-synth-prob}

The \emph{unitary group of order 2}, denoted $U(2)$, is the group of
$2\by 2$ complex unitary matrices. We also refer to the elements of
this group as operators, or \emph{gates}. The \emph{special unitary
  group of order 2}, denoted by $SU(2)$, is the subset of $U(2)$
consisting of unitary matrices of determinant 1. We will be concerned
with the notion of distance that arises from the operator norm, that
is, for $U$ and $U'$ in $U(2)$:
\[
\norm{U-U'} = \mbox{sup}\s{|Uv - U'v| \such |v|=1}.
\]
We refer to subsets of $U(2)$ as \emph{gate bases} and to a finite
word $W$ over a gate base $B$ as a \emph{circuit over $B$}. By a
slight abuse of notation, we write $W$ to denote both a circuit over
$B$ and the unitary obtained by multiplying the basis elements
composing $W$.

We are interested in decomposing, or \emph{synthesizing}, unitary
matrices into circuits over a given gate base. For a gate base $B$ and
unitary matrix $U$, the decomposition of $U$ over $B$ can be done
\emph{exactly}, if there exists a circuit $W$ over $B$ such that
$W=U$, or \emph{approximately up to some $\epsilon>0$}, if there
exists a circuit $W$ over $B$ such that $\norm{U-W}\leq \epsilon$. We
thus get the following two problems.

\begin{itemize}
\item \emph{Exact synthesis problem for $B$:} given a unitary $U$,
  determine whether there exists a circuit $W$ over $B$ such that
  $W=U$ and, in case such a circuit exists, construct one.
\item \emph{Approximate synthesis problem for $B$:} given a unitary
  $U$ and a precision $\epsilon\geq 0$, determine whether there exists
  a circuit $W$ over $B$ such that $\norm{W-U}\leq\epsilon$ and, in
  case such a circuit exists, construct one.
\end{itemize}

In what follows, we focus on finite gate bases. If $B$ is such a gate
base, then the set of circuits over $B$ is countable. Since $U(2)$ is
uncountable, this implies that the exact synthesis problem for $B$
will sometimes be solved negatively: there are unitary matrices that
cannot be exactly synthesized over $B$. However, if the set of
circuits over $B$ is dense in $U(2)$, then the approximate synthesis
problem for $B$ can always be solved positively.

Because the state of a qubit is defined up to scaling by a unit
scalar, the synthesis of a unitary $U$ is sometimes done \emph{up to a
  phase}.  This means that instead of finding a circuit $W$ such that
$\norm{U-W}\leq \epsilon$, one looks for a circuit $W$ and a unit
scalar $\lambda$ such that $\norm{U-\lambda W} \leq \epsilon$. This
defines a third synthesis problem.

\begin{itemize}
\item \emph{Approximate synthesis problem for $B$ up to a phase:}
  given a unitary $U$ and a precision $\epsilon\geq 0$, determine
  whether there exists a circuit $W$ over $B$ and a unit scalar
  $\lambda$ such that $\norm{U-\lambda W}\leq\epsilon$ and, in case
  such a circuit exists, construct one.
\end{itemize}

Since a global phase has no observable effect in quantum mechanics, it
is often sufficient to define a decomposition method for special
unitary matrices. Indeed, suppose that $B$ is a gate base such that
the set of circuits over $B$ is dense in $SU(2)$. If we have an
algorithm to approximately synthesize elements of $SU(2)$ into
circuits over $B$, then we can synthesize arbitrary unitary matrices
over $B$ up to a phase, since the determinant of a unitary matrix
always has norm 1.

A decomposition method solving any of the above three problems is
evaluated with respect to its \emph{time complexity} (what is its
run-time?) and to its \emph{circuit complexity} (how many gates are
contained in the produced circuit?).

\subsection{Synthesis of $z$-rotations using $V$\!-gates}
\label{subsec-z-rot-V}

We are interested in the following \emph{$V$\!-gates}
\[
V_X = \frac{1}{\sqrt 5}(I+2iX) = \frac{1}{\sqrt 5}
\begin{pmatrix}
  1 & 2i \\
  2i & 1
\end{pmatrix}, \quad V_Y = \frac{1}{\sqrt 5}(I+2iY) = \frac{1}{\sqrt
  5}
\begin{pmatrix}
  1 & 2 \\
  -2 & 1
\end{pmatrix}, \mbox{ and }
\]
\[
V_Z = \frac{1}{\sqrt 5}(I+2iZ) = \frac{1}{\sqrt 5}
\begin{pmatrix}
  1 + 2i & 0 \\
  0 & 1-2i
\end{pmatrix},
\]
and their adjoints
\[
V_X\da = \frac{1}{\sqrt 5}(I-2iX) = \frac{1}{\sqrt 5}
\begin{pmatrix}
  1 & -2i \\
  -2i & 1
\end{pmatrix}, \quad V_Y\da = \frac{1}{\sqrt 5}(I-2iY) =
\frac{1}{\sqrt 5}
\begin{pmatrix}
  1 & -2 \\
  2 & 1
\end{pmatrix}, \mbox{ and }
\]
\[
V_Z\da = \frac{1}{\sqrt 5}(I-2iZ) = \frac{1}{\sqrt 5}
\begin{pmatrix}
  1 - 2i & 0 \\
  0 & 1+2i
\end{pmatrix}.
\]
It was shown in \cite{LPS1986} and \cite{LPS1987} that the group
generated by the $V$\!-gates is dense in $SU(2)$. It was later shown
in \cite{HRC2002} that for any operator $U\in SU(2)$ and any precision
$\epsilon$, there exists an approximation for $U$ over $V=\s{V_X, V_Y,
  V_Z, V_X\da, V_Y\da, V_Z\da}$ that requires only $O(\log
(1/\epsilon))$ gates. However, no approximate synthesis algorithm was
provided. In \cite{BGS2013}, Bocharov, Gurevich, and Svore defined a
probabilistic algorithm for the approximate synthesis of unitaries
over the Pauli+$V$ gate set, which consists of the $V$\!-gates
together with the Pauli gates $X$, $Y$, and $Z$. Because the Pauli
gates form a subgroup of the Clifford gates, the algorithm of
\cite{BGS2013} is also a synthesis algorithm for the Clifford+$V$ gate
set, which consists of the $V$\!-gates together with the Clifford
gates, whose generators are:
\[
\omega = e^{i\pi/4}, \quad S =
\begin{pmatrix}
  1 & 0 \\
  0 & i
\end{pmatrix}, \quad \mbox{and} \quad H = \frac{1}{\sqrt 2}
\begin{pmatrix}
  1 & 1 \\
  1 & -1
\end{pmatrix}.
\]
In the context of the Clifford+$V$ gate set, the complexity of a
circuit is measured by counting the number of $V$\!-gates appearing in
it, its \emph{$V$\!-count}. This is due to the fact that the Clifford
operators can always be moved to the end of a circuit using equations
such as $\omega V_X= V_X \omega$, $SV_X = V_Y S$, $HV_X=V_Z H$, and so
on.

The algorithm of \cite{BGS2013} is efficient in the sense that it runs
in probabilistic polynomial time. Moreover, it yields circuits of
$V$\!-count bounded above by $12\log_5(2/\epsilon)$ for arbitrary
unitaries.

The method of \cite{BGS2013} was adapted from the one developed in
\cite{Selinger-newsynth} for the Clifford+$T$ gate set. It relies on
the definition of an algorithm for the Clifford+$V$ decomposition of
\emph{$z$-rotations}, i.e., matrices of the form
\[
\Rz (\theta) =
\begin{pmatrix}
  e ^{-i\theta/2} & 0 \\
  0 & e ^{i\theta/2}
\end{pmatrix}.
\]
For these gates, the algorithm of \cite{BGS2013} achieves circuits of
$V$\!-count bounded above by $4\log_5(2/\epsilon)$. Such an algorithm
can then be used for the synthesis of an arbitrary element $U$ of
$SU(2)$ by first writing $U$ as a product of three $z$-rotations using
Euler angles
\[
U=\Rz(\theta_1)X\Rz(\theta_2)X\Rz(\theta_3)
\]
and then applying the algorithm to each of the $\Rz(\theta_i)$.

\subsection{Results}
\label{subsec-contrib}

In the present paper, we define an efficient and optimal algorithm for
the approximate synthesis of $z$-rotations over the Clifford+$V$ gate
set.  Our algorithm is defined by adapting techniques developed in
\cite{gridsynth} for the Clifford+$T$ gate set. We stress that the
algorithm is \emph{literally optimal}, i.e., for any given pair
$(\theta,\epsilon)$ of an angle and a precision, the algorithm finds
the shortest possible ancilla-free Clifford+$V$ circuit $W$ such that
$\norm{W - \Rz(\theta)}\leq\epsilon$. As in \cite{gridsynth}, the
optimality of the algorithm depends on the presence of a factoring
oracle. Because of Shor's algorithm \cite{Shor}, a quantum computer
can serve as such an oracle. For this reason, the algorithm is
actually an efficient and optimal \emph{quantum} synthesis
algorithm. However, the \emph{classical} algorithm obtained in the
absence of a factoring oracle is efficient and nearly optimal: in this
case the algorithm produces circuits of $V$\!-count $m +
O(\log(\log(1/\epsilon)))$, where $m$ is the $V$\!-count of the
third-to-optimal solution. These properties of the classical algorithm
are established under a mild number-theoretic assumption.

We also describe a restricted version of the algorithm which
synthesizes $z$-rotations over the Pauli+$V$ gate set. This restricted
algorithm is also efficient and optimal, if a factoring oracle is
available, and efficient, but only near-optimal, otherwise.

\subsection{Related work}
\label{subsec-related}

Independently of the present paper, in \cite{BBG2014}, Blass,
Bocharov, and Gurevich defined an algorithm for the approximate
synthesis of $z$-rotations in the Pauli+$V$ basis. Their method is in
principle similar to ours, but they use a different technique to solve
the \emph{grid problems} of Section~\ref{ssect-grid-pbs}.

\section{Preliminaries}
\label{sec-prelim}

We write $\N$ for the semiring of non-negative integers, $\Z$ for the
ring of integers and $\C$ for the field of complex numbers. The
conjugate of a complex number is given by $(a+ib)\da =a-ib$. The
Gaussian integers $\Z[i]$ are the complex numbers whose real and
imaginary parts are both integral, i.e., the complex numbers $a+ib$
with $a,b\in\Z$. The units of $\Z[i]$ are $\pm1, \pm i$. Finally, the
group of Pauli operators is generated by the following matrices:
\[
X = \begin{pmatrix}
  0 & 1 \\
  1 & 0
\end{pmatrix}, \quad Y =
\begin{pmatrix}
  0 & -i  \\
  i & 0
\end{pmatrix}, \quad \mbox{and} \quad Z =
\begin{pmatrix}
  1 & 0 \\
  0 & -1
\end{pmatrix}.
\]
The Pauli group is a subgroup of the Clifford group. We write
Pauli+$S$ for the subgroup of the Clifford group generated by the
Pauli gates and the $S$ gate.

\section{Clifford+$V$ Exact Synthesis of Unitaries}
\label{sec-exact-synth}

In this section, we describe an algorithm to solve the problem of
exact synthesis in the Clifford+$V$ gate set. This material is adapted
from \cite{BGS2013}, where an algorithm for exact synthesis in the
Pauli+$V$ gate set was described using the theory of quaternions. We
also use some techniques developed in \cite{Giles-Selinger} for exact
synthesis in the Clifford+$T$ gate set.

\begin{problem}
  \label{pb-exact-synth}
  Given a unitary operator $U\in U(2)$, determine whether there exists
  a Clifford+$V$ circuit $W$ such that $U=W$ and, in case such a
  circuit exists, construct one whose $V$\!-count is minimal.
\end{problem}

To solve Problem~\ref{pb-exact-synth}, we consider unitary matrices of
the form
\begin{equation}
  \label{eq-matrix}
  U = \frac{1}{\rf k} \frac{1}{\rt \ell}\begin{pmatrix} 
    \alpha & \beta \\
    \gamma & \delta
  \end{pmatrix}, \quad \mbox{where $k,\ell\in\N$,
    $\alpha,\beta,\gamma,\delta\in\Z[i],$ and $0\leq \ell\leq 2$.}
\end{equation}
The integers $k$ and $\ell$ in (\ref{eq-matrix}) are called the
\emph{$\sqrt{5}$-denominator exponent} and the
\emph{$\sqrt{2}$-denominator exponent} of $U$ respectively. The least
$k$ (resp. $\ell$) such that $U$ can be written as above is the
\emph{least $\sqrt{5}$-denominator exponent} (resp. \emph{least
  $\sqrt{2}$-denominator exponent}) of $U$. These notions extend
naturally to vectors and scalars of the form
\begin{equation}
  \label{eq-vector}
  \frac{1}{\rf{k}} \frac{1}{\rt{\ell}}\begin{pmatrix} 
    \alpha \\
    \gamma 
  \end{pmatrix}
  \quad                           
  \mbox{ and }
  \quad
  \frac{1}{\rf{k}}\frac{1}{\rt{\ell}}~ \alpha,                            
\end{equation}
where $k,\ell\in\N$, $\alpha, \gamma\in\Z[i]$ and $0\leq \ell\leq
2$. In what follows, we refer to the pair $(k,\ell)$ as the
\emph{denominator exponent} of a matrix, vector, or scalar. It is then
understood that the first component of the pair is the $\sqrt
5$-exponent, while the second is the $\sqrt 2$-exponent. Note that the
least denominator exponent of a matrix, vector, or scalar is the pair
$(k,\ell)$, where $k$ and $\ell$ are the least $\sqrt 5$- and $\sqrt
2$-exponents respectively.

We will show that a unitary operator $U$ can be expressed as a
Clifford+$V$ circuit if and only if it is of the form
(\ref{eq-matrix}) and its determinant is a power of $i$. We start by
showing the left-to-right implication.

\begin{lemma}
  \label{lem-standard-form}
  If $U$ is a Clifford+$V$ operator, then $U=ABC$ where $A$ is a
  product of $V$\!-gates, $B$ is a Pauli+$S$ operator, and $C$ is one
  of $I$, $H$, $HS$, $\omega$, $H\omega$, and $HS\omega$.
\end{lemma}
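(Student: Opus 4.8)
The plan is to argue by induction on the $V$-count of the Clifford+$V$ circuit representing $U$, pushing all Clifford gates to the right past the $V$-gates and then normalizing the resulting Clifford suffix. First I would recall the key commutation phenomenon already noted in the introduction: for any $V$-gate $V_P$ (with $P\in\{X,Y,Z\}$) and any Clifford generator $g\in\{\omega,S,H\}$, conjugating $V_P$ by $g$ again yields a $V$-gate, since $gPg^{-1}$ is (up to sign, which is absorbed into $I\pm 2iP$ appropriately) another Pauli. Concretely, $\omega$ commutes with every $V$-gate, $S$ permutes $V_X\leftrightarrow V_Y$ and fixes $V_Z$, and $H$ swaps $V_X\leftrightarrow V_Z$ and sends $V_Y$ to $V_Y\da$ (and similarly for adjoints). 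Hence whenever a word over the Clifford+$V$ generators contains a factor $g V_P$, we may rewrite it as $V_{P'} g$ for a suitable $V$-gate $V_{P'}$. Iterating, every Clifford+$V$ word is equal to $A B'$ where $A$ is a product of $V$-gates and $B'$ is a Clifford operator.

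The remaining task is purely about the finite Clifford group: I must show every Clifford operator $B'$ can be written as $BC$ with $B$ a Pauli+$S$ operator and $C\in\{I,H,HS,\omega,H\omega,HS\omega\}$. I would do this by examining the cosets of the Pauli+$S$ subgroup in the Clifford group. The Clifford group modulo the Pauli+$S$ subgroup is a small finite group — the six listed coset representatives should exhaust it — so the claim reduces to checking that $\{I,H,HS,\omega,H\omega,HS\omega\}$ is a complete set of coset representatives for Pauli+$S$ in the Clifford group. This can be verified by a finite computation: the Clifford group (modulo global phase $\omega$) has order $192$ on one qubit, the Pauli+$S$ subgroup (modulo phase) contains the $8$ elements generated by $X,Z,S$, so there are $24$ single-qubit Cliffords up to Pauli+$S$ and phase; accounting for the phase factor $\omega$ (order $8$) and the fact that $S$ already supplies $\omega$-type phases through $S^2=Z$, one finds exactly $6$ distinct cosets, represented by the listed elements. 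The identities $HSHSHS=\omega$ and the like let one confirm closure.

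The main obstacle, such as it is, is bookkeeping rather than conceptual: one must be careful that the commutation rules are stated up to the correct Clifford phase (e.g. $SV_XS\da=V_Y$ versus a possible stray $\omega$ or sign), and that when a $V$-gate is moved past $H$ the image $V_Y\da$ rather than $V_Y$ is recorded. I would handle this by tabulating $gV_Pg\da$ for all $g\in\{\omega,S,H,S\da\}$ and $P\in\{X,Y,Z\}$ once and for all, noting that in every case the result is again one of the six $V$-gates, with any residual phase being a power of $\omega$ that is harmlessly pushed into $B'$. With that table in hand, the induction goes through immediately, and the coset decomposition of $B'$ completes the proof.
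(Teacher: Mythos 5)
Your proposal follows exactly the same two-step strategy as the paper's proof: first commute all Clifford generators to the right past the $V$-gates (using the fact that conjugating a $V$-gate by a Clifford generator yields another $V$-gate), writing $U=AA'$ with $A$ a product of $V$-gates and $A'$ Clifford; then decompose $A'=BC$ via the index-6 coset structure of Pauli+$S$ inside the Clifford group with representatives $I,H,HS,\omega,H\omega,HS\omega$. You supply more justification for the coset count than the paper (which simply asserts the index is 6 and lists the cosets), and there is a small slip in your commutation table (conjugating $V_Y$ by $S$ gives $V_X\da$, not $V_X$), but since adjoints of $V$-gates are in the gate set this does not affect the argument.
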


\begin{proof}
  Clifford gates and $V$\!-gates can be commuted in the sense that for
  every pair $C,V$ of a Clifford gate and a $V$\!-gate, there exists a
  pair $C',V'$ such that $CV=V'C'$. This implies that a Clifford+$V$
  operator $U$ can always be written as $U=AA'$, where $A$ is a
  product of $V$\!-gates and $A'$ is a Clifford operator. Furthermore,
  the Pauli+$S$ group has index 6 as a subgroup of the Clifford group
  and its cosets are: Pauli+$S$, Pauli+$S\cdot H$, Pauli+$S\cdot HS$,
  Pauli+$S\cdot\omega$, Pauli+$S\cdot H\omega$, and Pauli+$S\cdot
  HS\omega$. It thus follows that a Clifford operator $A'$ can always
  be written as $A'=BC$ with $B$ a Pauli+$S$ operator and $C$ one of
  $I$, $H$, $HS$, $\omega$, $H\omega$, and $HS\omega$.
\end{proof}

To show, conversely, that every matrix of the form (\ref{eq-matrix})
whose determinant is a power of $i$ can be represented by a
Clifford+$V$ circuit, we proceed as in \cite{Giles-Selinger}. We show
that every unit vector of the form
(\ref{eq-vector}) can be reduced to $e_1=\left( \begin{smallmatrix} 1 \\
    0\end{smallmatrix}\right)$ by applying a sequence of carefully
chosen Clifford+$V$ gates. Then, we show how applying this method to
the fist column of a unitary matrix $U$ of the form (\ref{eq-matrix})
yields a Clifford+$V$ circuit for $U$.

\begin{lemma}
  \label{lem-l-invariant}
  If $u$ is a unit vector of the form (\ref{eq-vector}) with least
  $\sqrt 5$-denominator exponent $k$ and $W$ is a Clifford circuit,
  then $Wu$ has least $\sqrt 5$-denominator exponent $k$.
\end{lemma}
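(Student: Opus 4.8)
The plan is to use the fact that, unlike the $V$-gates, the Clifford generators $\omega$, $S$, and $H$ never introduce a factor of $\sqrt 5$ into the entries of a matrix: they only introduce factors of $\sqrt 2$. I would first establish the one-sided statement that for \emph{any} Clifford circuit $W$ and \emph{any} vector $v$ of the form (\ref{eq-vector}), the least $\sqrt 5$-denominator exponent of $Wv$ is at most that of $v$. The lemma then follows immediately by applying this statement to both $W$ and $W^{-1}$ (the latter is again a Clifford circuit).

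For the one-sided statement, write the given $u$ as $u = \frac{1}{\rf k}\,\rtt{\ell}\, w$ with $w \in \Z[i]^2$. Each of the generators $\omega$, $S$, $H$ has all of its matrix entries in $\rtt{1}\Z[i]$ (indeed $S$ has entries in $\Z[i]$), so a Clifford circuit $W$ expressed as a word of length $t$ in these generators satisfies $\rt t\, W \in M_2(\Z[i])$. Therefore
\[
\rf k\, W u \;=\; \rtt{\ell + t}\,\bigl(\rt t\, W\bigr)\, w \;\in\; \rtt{\ell+t}\,\Z[i]^2,
\]
since $(\rt t\, W)\,w$ is the product of a Gaussian-integer matrix and a Gaussian-integer vector. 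Hence $Wu$ is of the form (\ref{eq-vector}) — after renormalizing its $\sqrt 2$-denominator exponent into the range $\{0,1,2\}$, which is possible because $Wu$ is again a unit vector, exactly as in the treatment of $\sqrt 2$-exponents in \cite{Giles-Selinger} — and its $\sqrt 5$-denominator exponent is at most $k$.

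Applying this to $W^{-1}$ and the vector $Wu$ shows that the least $\sqrt 5$-denominator exponent of $u = W^{-1}(Wu)$ is at most that of $Wu$; combined with the previous paragraph this forces the two exponents to agree, so $Wu$ has least $\sqrt 5$-denominator exponent exactly $k$.

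The $\sqrt 5$-bookkeeping above is essentially immediate once one notices that no Clifford generator contains $\sqrt 5$, so I expect the only genuinely delicate point to be the implicit claim that $\sqrt 5$ is ``independent'' of $\sqrt 2$ and $i$ — i.e.\ that lowering the $\sqrt 2$-denominator exponent of $Wu$ into its normal-form range can never also lower its $\sqrt 5$-denominator exponent. This rests on the fact that $\sqrt 5 \notin \mathbb{Q}(i,\sqrt 2)$, which is also what makes ``least $\sqrt 5$-denominator exponent'' a well-defined invariant of a vector of the form (\ref{eq-vector}) in the first place.
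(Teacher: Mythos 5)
Your proof is correct in substance but takes a genuinely different route from the paper's. The paper's proof works generator-by-generator: it writes out $Hu$, $\omega u$, and $Su$ explicitly and argues modulo $5$ (using that $x+y \equiv x-y \equiv 0~(\mymod 5)$ forces $x \equiv y \equiv 0~(\mymod 5)$) that the numerator entries cannot all become divisible by $5$, so the least $\sqrt 5$-exponent stays at $k$. You instead prove two inequalities separately: first a structural observation that the Clifford generators have entries in the $\sqrt 5$-free ring $\Z[i,\rtt{1}]$, so the $\sqrt 5$-exponent of $Wu$ is at most $k$; then the reverse inequality by applying the same observation to $W^{-1}$. Your route is shorter and avoids the case analysis; the paper's is more self-contained and directly exhibits the invariance for each generator. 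Both arguments quietly rely on $\sqrt 5 \notin \mathbb{Q}(i,\sqrt 2)$ to know that ``least $\sqrt 5$-denominator exponent'' is a well-defined invariant unaffected by $\sqrt 2$-renormalization; you flag this explicitly, the paper does not.

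Two small slips, neither fatal. The claim $\rt{t}\,W \in M_2(\Z[i])$ is not literally true: for $W=S$ one has $\sqrt 2\,S \notin M_2(\Z[i])$, since $\Z[i]\not\subseteq\rtt{1}\Z[i]$. What you actually need, and what does follow from the entries of each generator lying in $\Z[i]\cup\rtt{1}\Z[i]$, is that $\rt{m}\,W\in M_2(\Z[i])$ for some $m$ with $0\leq m\leq t$ (take $m$ to be the number of occurrences of $\omega$ and $H$ in the word), or more crudely that $2^{t}W\in M_2(\Z[i])$; either is enough to conclude $\rf{k}\,Wu\in\rtt{\ell+m}\Z[i]^2$. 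Second, you state the one-sided claim for arbitrary vectors of the form (\ref{eq-vector}), but the renormalization of the $\sqrt 2$-exponent back into $\{0,1,2\}$ relies on $Wu$ being a unit vector, so that $|w'|^2=5^k2^m$ together with $m\geq 3$ forces every real and imaginary part of $w'$ to be even. That is harmless here, since you only ever apply the one-sided statement to $u$ and $Wu$, both unit vectors, but the statement of the intermediate claim should be restricted accordingly.
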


\begin{proof}
  It suffices to show that the generators of the Clifford group
  preserve the least $\sqrt 5$-denominator exponent of $u$. The
  general result then follows by induction. To this end, write $u$ as
  in (\ref{eq-vector}), with $\alpha=a+ib$ and $\gamma=c+id$:
  \[
  u=\frac{1}{\rf{k}} \frac{1}{\rt{\ell}}\begin{pmatrix}
    a+ib \\
    c+id
  \end{pmatrix}.
  \]
  Now apply $H$, $\omega$, and $S$ to $u$:
  \[
  Hu=\frac{1}{\rf{k}} \frac{1}{\rt{\ell+1}}\begin{pmatrix}
    (a+c)+i(b+d) \\
    (a-c)+i(b-d)
  \end{pmatrix}, \quad \omega u = \frac{1}{\rf{k}}
  \frac{1}{\rt{\ell+1}}\begin{pmatrix}
    (a-b)+i(a+b) \\
    (c-d)+i(c+d)
  \end{pmatrix},
  \]
  \[
  Su = \frac{1}{\rf{k}} \frac{1}{\rt{\ell}}\begin{pmatrix}
    a+ib \\
    -d+ic
  \end{pmatrix}.
  \]
  By minimality of $k$, one of $a,b,c,d$ is not divisible by 5. The
  least $\sqrt 5$-denominator of $Su$ is therefore $k$. Moreover, for
  any two integers $x$ and $y$, $x+y\equiv x-y \equiv 0 ~(\mymod 5)$
  implies $x\equiv y \equiv 0 ~(\mymod 5)$. Thus the least $\sqrt
  5$-denominator exponent of $Hu$ and $\omega u$ is also $k$.
\end{proof}

\begin{lemma}
  \label{lem-column-l}
  If $u$ is a unit vector of the form (\ref{eq-vector}) with least
  denominator exponent $(k,\ell)$, then there exists a Clifford
  circuit $W$ such that $Wu$ has least denominator exponent $(k,0)$.
\end{lemma}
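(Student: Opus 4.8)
The plan is to reduce the $\sqrt 2$-denominator exponent of $u$ one unit at a time by left-multiplying with a single Clifford gate, iterating until it reaches $0$; since $\ell\leq 2$, at most two such steps are needed, and the composite of the gates used is the circuit $W$. At every step Lemma~\ref{lem-l-invariant} keeps the least $\sqrt 5$-denominator exponent equal to $k$, so once the $\sqrt 2$-exponent is $0$ the least denominator exponent of the result is exactly $(k,0)$. Thus the whole lemma reduces to the following claim: if $u = \frac{1}{\rf{k}}\frac{1}{\rt{\ell}}\begin{pmatrix}\alpha\\\gamma\end{pmatrix}$ is a unit vector of the form (\ref{eq-vector}) with $\ell\geq 1$, then some Clifford circuit maps $u$ to a vector that can be written with $\sqrt 2$-exponent $\ell-1$.

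To prove the claim, write $\alpha=a+ib$ and $\gamma=c+id$. Being a unit vector gives $|\alpha|^2+|\gamma|^2 = 5^k\,2^\ell$, which is even, so $|\alpha|^2$ and $|\gamma|^2$ have equal parity, and I would split into two cases according to that common parity. If both are even, then $a\equiv b$ and $c\equiv d ~(\mymod 2)$, i.e.\ $1+i$ divides both $\alpha$ and $\gamma$; writing $\alpha=(1+i)\alpha'$ and $\gamma=(1+i)\gamma'$ with $\alpha',\gamma'\in\Z[i]$ and using $1+i=\sqrt 2\,\omega$, we get $u = \omega\cdot\frac{1}{\rf{k}}\frac{1}{\rt{\ell-1}}\begin{pmatrix}\alpha'\\\gamma'\end{pmatrix}$, so $\omega\da u$ has $\sqrt 2$-exponent $\ell-1$ and $\omega\da$ is a Clifford circuit. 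If instead both $|\alpha|^2$ and $|\gamma|^2$ are odd, then the residues $(a\bmod 2,\,b\bmod 2)$ and $(c\bmod 2,\,d\bmod 2)$ each lie in $\{(1,0),(0,1)\}$; since applying $S$ sends $\gamma$ to $i\gamma$, which swaps the two coordinates of $\gamma$'s residue, after possibly applying $S$ we may assume $\alpha$ and $\gamma$ have the same residue. Then $2$ divides both $\alpha+\gamma$ and $\alpha-\gamma$, so $Hu = \frac{1}{\rf{k}}\frac{1}{\rt{\ell+1}}\begin{pmatrix}\alpha+\gamma\\\alpha-\gamma\end{pmatrix}$ may be rewritten with $\sqrt 2$-exponent $\ell-1$; hence $H$ or $HS$ does the job.

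The only part requiring care is the parity/residue bookkeeping that confirms these three moves ($\omega\da$, $H$, $HS$) exhaust all possibilities when $\ell\geq 1$ — in particular the observations that $|\alpha|^2$ even is equivalent to $1+i\mid\alpha$, and that $S$ permutes residues as claimed. The algebra of factoring out $1+i$ or dividing $\alpha\pm\gamma$ by $2$, and the check that the output is again of the form (\ref{eq-vector}), are routine. Iterating the claim then produces the circuit $W$, and a final appeal to Lemma~\ref{lem-l-invariant} gives that $Wu$ has least denominator exponent $(k,0)$.
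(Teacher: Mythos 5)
Your proof is correct, and it takes a somewhat different route from the paper's. The paper argues by a direct case analysis on $\ell\in\{0,1,2\}$: for each value it exhibits a one-shot Clifford operator (possibly after a Pauli+$S$ preconditioning) — $H$ or $\omega$ when $\ell=1$, $H\omega$ when $\ell=2$ — and verifies that its image can be written with $\sqrt 2$-exponent $0$ by tracking the residues of $a,b,c,d$ modulo $2$ (and, in the $\ell=2$ case, modulo $4$). You instead isolate a single one-step reduction and iterate it, organizing the two cases by the parity of $|\alpha|^2$ (equivalently, by whether $1+i$ divides both $\alpha$ and $\gamma$): in the divisible case $\omega\da$ peels off the common factor $1+i=\sqrt 2\,\omega$, and in the non-divisible case $H$ or $HS$ does the job after aligning the residues of $\alpha$ and $\gamma$. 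Your organizing principle is a bit cleaner — the $(1+i)$-divisibility dichotomy replaces the ad hoc parity patterns of $a,b,c,d$ — and it avoids the modulo-$4$ bookkeeping that the paper's $\ell=2$ case requires. Conversely, because $\ell\leq 2$ in this setting, the iteration saves little length over the paper's direct approach; the advantage of your phrasing is mainly conceptual, and it would generalize more smoothly if $\ell$ were not bounded a priori. One small point worth making explicit when you iterate: each step produces a unit vector still of the form~(\ref{eq-vector}) with strictly smaller $\sqrt 2$-exponent, so the process terminates after at most $\ell$ steps with a representation of exponent $0$, at which point Lemma~\ref{lem-l-invariant} gives least denominator exponent exactly $(k,0)$. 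You note this, and it is indeed all that is needed.
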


\begin{proof}
  By Lemma~\ref{lem-l-invariant}, we need not worry about $k$ and only
  have to focus on reducing $\ell$. Write $u$ as in (\ref{eq-vector}),
  with $0\leq \ell \leq 2$, $\alpha=a+ib$, and $\gamma=c+id$. Since
  $u$ has unit norm, we have $a^2+b^2+c^2+d^2 = 5^k2^\ell$. We prove
  the lemma by case distinction on $\ell$. If $\ell=0$, there is
  nothing to prove. The remaining cases are treated as follows.
  \begin{itemize}
  \item $\ell=1$. In this case $a^2 + b^2 +c^2 +d^2 = 5^k\cdot 2
    \equiv 0 ~(\mymod 2)$. Therefore only an even number amongst
    $a,b,c,d$ can be odd. Using a Pauli+$S$ operator, we can without
    loss of generality assume that $a\equiv c ~(\mymod 2)$ and
    $b\equiv d ~(\mymod 2)$ or that $a\equiv b ~(\mymod 2)$ and
    $c\equiv d ~(\mymod 2)$. It then follows that either $Hu$ or
    $\omega u$ has denominator exponent $(k,0)$ since
    \[
    Hu = \frac{1}{\rf k}\frac{1}{2}
    \begin{pmatrix}
      (a+c) + i (b+d) \\
      (a-c) + i(b-d)
    \end{pmatrix} \quad \mbox{ and } \quad \omega u = \frac{1}{\rf
      k}\frac{1}{2}
    \begin{pmatrix}
      (a-b) + i (a+b) \\
      (c-d) + i(c+d)
    \end{pmatrix}.
    \]
  \item $\ell=2$. In this case $a^2 + b^2 +c^2 +d^2 = 5^k\cdot 4
    \equiv 0 ~(\mymod 4)$. This implies that $a,b,c$ and $d$ must have
    the same parity and thus, by minimality of $\ell$, must all be
    odd. Using a Pauli+$S$ operator, we can without loss of generality
    assume that $a\equiv b \equiv c \equiv d \equiv 1 ~(\mymod 4)$. It
    then follows that $H\omega u$ has denominator exponent $(k,0)$
    since
    \[
    H\omega u = \frac{1}{\rf k}\frac{1}{4}
    \begin{pmatrix}
      (a-b+c-d) + i (a+b+c+d) \\
      (a-b-c+d) + i(a+b-c-d)
    \end{pmatrix}.
    \]\qedhere
  \end{itemize}
\end{proof}

\begin{remark}
  Let $V$ be one of the $V$\!-gates, $u$ be a vector of the form
  (\ref{eq-vector}), and $k$ and $k'$ be the least $\sqrt
  5$-denominator exponents of $u$ and $Vu$ respectively. Then $k'\leq
  k+1$. Moreover, If it were the case that $k'<k-1$, then the least
  $\sqrt 5$-denominator exponent of $V\da Vu=u$ would be strictly less
  $k$ which is absurd. Thus $k-1 \leq k' \leq k+1$.
\end{remark}

\begin{lemma}
  \label{lem-column-k}
  If $u$ is a unit vector of the form (\ref{eq-vector}) with least
  denominator exponent $(k,0)$, then there exists a Pauli+$V$ circuit
  $W$ of $V$\!-count $k$ such that $Wu = e_1$, the first standard
  basis vector.
\end{lemma}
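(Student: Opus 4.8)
The plan is to induct on the least $\sqrt5$-denominator exponent $k$ of the unit vector $u$. In the base case $k=0$, we have $u = \frac{1}{\rt\ell}\left(\begin{smallmatrix}\alpha\\\gamma\end{smallmatrix}\right)$ with $\alpha,\gamma\in\Z[i]$ and, since the least denominator exponent is $(0,0)$, in fact $\ell=0$, so $u=\left(\begin{smallmatrix}\alpha\\\gamma\end{smallmatrix}\right)$ with $|\alpha|^2+|\gamma|^2=1$; hence one of $\alpha,\gamma$ is a unit of $\Z[i]$ and the other is $0$, so a single Pauli (specifically an $X$ or a power of $Z$) maps $u$ to $e_1$. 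This Pauli is a Pauli+$V$ circuit of $V$-count $0$, as required.

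For the inductive step, suppose $k\geq 1$. Writing $u = \frac{1}{\rf k}\frac{1}{\rt\ell}\left(\begin{smallmatrix}\alpha\\\gamma\end{smallmatrix}\right)$, the unit-norm condition gives $|\alpha|^2 + |\gamma|^2 = 5^k 2^\ell$; since the denominator exponent is $(k,0)$ we have $\ell=0$ and $|\alpha|^2+|\gamma|^2 = 5^k$. The key arithmetic fact is that, working modulo $5$, the equation $|\alpha|^2+|\gamma|^2\equiv 0\pmod 5$ together with the minimality of $k$ forces a specific congruence structure on $\alpha$ and $\gamma$ in $\Z[i]/5\Z[i]$. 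Since $5 = (2+i)(2-i)$ splits in $\Z[i]$, one analyzes the residues of $\alpha,\gamma$ modulo the primes $2\pm i$. The goal is to show that by multiplying $u$ on the left by a suitable Pauli operator and then by one of the six $V$-gates $V_X^{\pm1}, V_Y^{\pm1}, V_Z^{\pm1} = \frac{1}{\rf1}(I \pm 2iP)$, the resulting vector $V\cdot(\text{Pauli})\cdot u$ has least $\sqrt5$-denominator exponent exactly $k-1$ — the $V$-gate contributes a factor $\frac{1}{\rf1}$ in front, and for the right choice of $P$ and sign the numerator entries of $(I\pm 2iP)\left(\begin{smallmatrix}\alpha\\\gamma\end{smallmatrix}\right)$ become divisible by $2\pm i$ (hence the combined numerator becomes divisible by $5$), cancelling the extra $\rf1$ and dropping the exponent by one. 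By the remark preceding the lemma the exponent can drop by at most $1$ under a single $V$-gate, so $k-1$ is the best possible and is achieved. We must also check that the $\sqrt2$-exponent stays $0$: a $V$-gate has entries in $\Z[i]$ up to the $\frac{1}{\rf1}$ factor, so it introduces no $\sqrt2$; after dividing out the $5$ the new vector is again of the form $\frac{1}{\rf{k-1}}\left(\begin{smallmatrix}\alpha'\\\gamma'\end{smallmatrix}\right)$, and Lemma~\ref{lem-column-l} (or a direct check) lets us assume its denominator exponent is $(k-1,0)$. Then the induction hypothesis supplies a Pauli+$V$ circuit $W'$ of $V$-count $k-1$ with $W'\cdot V\cdot(\text{Pauli})\cdot u = e_1$, and $W = W'\cdot V\cdot(\text{Pauli})$ is a Pauli+$V$ circuit of $V$-count $k$ sending $u$ to $e_1$.

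The main obstacle is the case analysis in the inductive step: one must verify that for \emph{every} unit vector with least $\sqrt5$-exponent $k\geq1$, some choice among the six $V$-gates (composed with an appropriate Pauli adjustment) genuinely lowers the exponent. Concretely this reduces to a finite computation in $\Z[i]/5\Z[i]$ — classifying the possible residue pairs $(\alpha,\gamma)$ modulo $2+i$ and $2-i$ subject to $|\alpha|^2+|\gamma|^2\equiv 0$, and matching each class to a $V$-gate that makes both new numerator coordinates divisible by the relevant prime. The action of $V_X, V_Y, V_Z$ and their adjoints on residues, combined with the freedom to conjugate/permute coordinates by Paulis ($X$ swaps the coordinates, $Z$ negates the second, etc.), should cover all cases; checking exhaustiveness is the crux, and I would organize it by first using a Pauli to normalize the residue of, say, $\gamma$ (or the ratio $\alpha/\gamma$) into a canonical form, then reading off which $V$-gate works.
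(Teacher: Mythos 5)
Your overall strategy matches the paper's: induct on the least $\sqrt5$-denominator exponent $k$, dispatch the base case $k=0$ with a single Pauli, and for $k\geq 1$ find one $V$\!-gate (possibly after a Pauli) that drops the exponent by exactly one, then invoke the induction hypothesis. Your observation that no $\sqrt2$ is ever introduced (so the denominator exponent stays of the form $(\cdot,0)$ throughout) is correct and makes your appeal to Lemma~\ref{lem-column-l} unnecessary; indeed that lemma would require $H$ and $\omega$, which would take you out of Pauli+$V$, so it is good that it is not actually needed.

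The gap is the one you yourself flag: the finite residue analysis in the inductive step, which is essentially the entire content of the paper's proof. The paper carries it out by writing $\alpha=a+ib$, $\gamma=c+id$, reducing the four real integers $a,b,c,d$ modulo $5$, observing that $a^2+b^2+c^2+d^2\equiv 0\pmod 5$ forces the residue pattern (up to permutation and sign) to be either $(\pm2,\pm1,0,0)$ or $(\pm2,\pm2,\pm1,\pm1)$, using a Pauli to normalize $a\equiv 2$, and then listing in two tables which of the six $V$\!-gates succeeds in each remaining case. Your alternative framing via $\Z[i]/(2+i)\times\Z[i]/(2-i)\cong\mathbb{F}_5\times\mathbb{F}_5$ is valid and, when carried out, is arguably cleaner: writing $p_{\pm}=[\alpha:\gamma]\in\mathbb{P}^1(\mathbb{F}_5)$ for the residues modulo $2\pm i$ (when the vector is nonzero there), the unit-norm constraint $\alpha\bar\alpha+\gamma\bar\gamma\equiv 0\pmod5$ becomes exactly the orthogonality $p_+\perp p_-$ with respect to the form $xx'+yy'$, and the six admissible ordered orthogonal pairs in $\mathbb{P}^1(\mathbb{F}_5)$ are matched one-to-one by the six gates $V_X,V_X\da,V_Y,V_Y\da,V_Z,V_Z\da$ (using $2i\equiv1\pmod{2+i}$ and $2i\equiv-1\pmod{2-i}$, so $I+2iP\equiv I+P$ and $I-P$ on the two factors); if one of $p_{\pm}$ is undefined the other residue alone picks out the gate. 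Notably, no Pauli normalization is required in this framing. But none of that is verified in your plan, and your intermediate phrasing that making the entries ``divisible by $2\pm i$'' is enough to give divisibility by $5$ is imprecise: you need divisibility by both Gaussian primes $2+i$ and $2-i$. Until the exhaustive check is actually written down, the proof is incomplete.
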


\begin{proof}
  Write $u$ as in (\ref{eq-vector}) with $\ell=0$, $\alpha=a+ib$, and
  $\gamma=c+id$. Since $u$ has unit norm, we have $a^2+b^2+c^2+d^2 =
  5^k2^0=5^k$. We prove the lemma by induction on $k$.
  \begin{itemize}
  \item $k=0$. In this case $a^2+b^2+c^2+d^2 = 1$. It follows that
    exactly one of $a,b,c,d$ is $\pm 1$ while all the others are
    0. Then $u$ can be reduced to $e_1$ by acting on it using a Pauli
    operator.
  \item $k>0$. In this case $a^2+b^2+c^2+d^2\equiv 0~(\mymod 5)$.  We
    will show that there exists a Pauli+$V$ operator $U$ of
    $V$\!-count 1 such that the least denominator exponent of $Uu$ is
    $k-1$. It then follows by the induction hypothesis that there
    exists $U'$ of $V$\!-count $k-1$ such that $U'Uu=e_1$, which then
    completes the proof.
    
    Consider the residues modulo 5 of $a,b,c,$ and $d$. Since $0,1,$
    and $4$ are the only squares modulo 5, then, up to a reordering of
    the tuple $(a,b,c,d)$, we must have:
    \[
    (a,b,c,d) \equiv \left\{ \begin{array}{l}
        (0,0,0,0) \\
        (\pm2,\pm1,0,0) \\
        (\pm2, \pm2, \pm1, \pm1).
      \end{array} \right. 
    \]
    However, by minimality of $k$, we know that $a\equiv b\equiv
    c\equiv d \equiv 0$ is impossible, so the other two cases are the
    only possible ones. We treat them in turn.
    
    First, assume that one of $a,b,c,d$ is congruent to $\pm 2$, one
    is congruent to $\pm 1$, and the remaining two are congruent to
    $0$. By acting on $u$ with a Pauli operator, we can moreover
    assume without loss of generality that $a\equiv 2$. Now if
    $b\equiv 1$, consider $V_Zu$:
    \[
    V_Zu = \frac{1}{\sqrt{5}^{k+1}}
    \begin{pmatrix}
      (a-2b) + i (2a+b) \\
      (c+2d) + i (d-2c)
    \end{pmatrix}.
    \]
    Since $a\equiv 2$, $b\equiv 1$, and $c\equiv d\equiv 0$, we get
    $(a-2b)\equiv (2a+b)\equiv (c+2d)\equiv (d-2c)\equiv 0~(\mymod
    5)$. The least denominator exponent of $V_Zu$ is therefore
    $k-1$. If on the other hand $b\equiv -1$ then
    \[
    {V_Z}\da u = \frac{1}{\sqrt{5}^{k+1}}
    \begin{pmatrix}
      (a+2b) + i (b-2a) \\
      (c-2d) + i (d+2c)
    \end{pmatrix}
    \]
    and reasoning analogously shows that the least denominator
    exponent of ${V_Z}\da u$ is $k-1$. A similar argument can be made
    in the remaining cases, i.e., when $c\equiv \pm 1$ or $d\equiv
    \pm1$. For brevity, we list the desired operators in the table
    below. The left column describes the residues of $a,b,c$, and $d$
    modulo 5 and the right column gives the operator $U$ such that
    $Uu$ has least denominator exponent $k-1$.
    \begin{center}
      \begin{tabular}{r|l}
        $(a,b,c,d)$   & $U$         \\
        \hline
        \hline \rule{0pt}{3ex} 
        $(2,1,0,0) $  & $V_Z$       \\
        $(2,0,1,0)$   & ${V_Y}\da$  \\
        $(2,0,0,1)$   & $V_X$       \\
        $(2,-1,0,0)$  & $V_Z\da$    \\  
        $(2,0,-1,0)$  & $V_Y$       \\
        $(2,0,0,-1)$  & ${V_X}\da$           
      \end{tabular} 
    \end{center}
    
    Now assume that two of $a,b,c,d$ are congruent to $\pm2$ while the
    remaining two are congruent to $\pm1$. We can use Pauli operators
    to guarantee that $a\equiv 2$ and $c\geq 0$. As above, we list the
    desired operators in a table for conciseness. It can be checked
    that in each case the given operator is such that the least
    denominator exponent of $Uu$ is $k-1$.
    \begin{center}
      \begin{tabular}{r|l}
        $(a,b,c,d)$   & $U$         \\
        \hline
        \hline \rule{0pt}{3ex}
        $(2,2,1,1)$   & ${V_Y}\da$  \\    
        $(2,1,2,1)$   & $V_X$       \\
        $(2,1,1,2)$   & $V_Z$       \\
        $(2,1,2,-1)$  & $V_Z$       \\
        $(2,-1,2,1)$  & ${V_Z}\da$  \\
        $(2,2,1,-1)$  & ${V_X}\da$  \\
        $(2,-2,1,1)$  & $V_X$       \\
        $(2,1,1,-2)$  & ${V_Y}\da$  \\
        $(2,-1,1,2)$  & ${V_Y}\da$  \\
        $(2,-1,1,-2)$ & ${V_Z}\da$  \\
        $(2,-1,2,-1)$ & ${V_X}\da$  \\
        $(2,-2,1,-1)$ & ${V_Y}\da$           
      \end{tabular} 
    \end{center}\qedhere
  \end{itemize}
\end{proof}

We can now solve Problem~\ref{pb-exact-synth}.

\begin{proposition}
  \label{prop-exact-clifford}
  A unitary operator $U\in U(2)$ is exactly representable by a
  Clifford+$V$ circuit if and only if $U$ is of the form
  (\ref{eq-matrix}) and $\det(U)=i^n$ for some integer $n$. Moreover,
  there exists an efficient algorithm that computes a Clifford+$V$
  circuit for $U$ with $V$\!-count equal to the least $\sqrt
  5$-denominator exponent of $U$, which is minimal.
\end{proposition}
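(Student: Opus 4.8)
The plan is to prove the stated equivalence in two directions and then handle minimality and efficiency, with the reverse direction doubling as the algorithm.

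\emph{Forward direction.} Suppose $U$ is a Clifford$+V$ operator. By Lemma~\ref{lem-standard-form} we may write $U=ABC$ with $A$ a product of $V$\!-gates, $B$ a Pauli$+S$ operator, and $C$ one of $I,H,HS,\omega,H\omega,HS\omega$. Each $V$\!-gate has entries in $\frac{1}{\sqrt5}\Z[i]$, so $A$ has entries in $\frac{1}{\rf{k}}\Z[i]$ for some $k$, hence $\sqrt2$-exponent $0$; $B$ has entries in $\Z[i]$, hence $\sqrt2$-exponent $0$; and a direct check shows each of the six possible $C$ has entries in $\frac{1}{\rt{\ell}}\Z[i]$ with $0\leq\ell\leq 2$ (the extreme cases $H\omega$ and $HS\omega$ realizing $\ell=2$). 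The product $ABC$ is therefore of the form (\ref{eq-matrix}). For the determinant: every $V$\!-gate has determinant $1$, the generator $\omega=e^{i\pi/4}I$ has determinant $i$, $\det S=i$, and $\det H=-1$, so $\det U=\det A\,\det B\,\det C$ is a power of $i$.

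\emph{Reverse direction and the algorithm.} Let $U$ be of the form (\ref{eq-matrix}) with $\det U=i^n$, and let $u$ be its first column, a unit vector of the form (\ref{eq-vector}) with least $\sqrt5$-denominator exponent $k$. By Lemma~\ref{lem-column-l} there is a Clifford circuit $W_1$ with $W_1u$ of least denominator exponent $(k,0)$, and then by Lemma~\ref{lem-column-k} a Pauli$+V$ circuit $W_2$ of $V$\!-count $k$ with $W_2W_1u=e_1$. Put $W'=W_2W_1$, a Clifford$+V$ circuit of $V$\!-count $k$. Then $W'U$ is unitary with first column $e_1$, so $W'U=\mathrm{diag}(1,\mu)$ with $|\mu|=1$; but $\mu=\det(W'U)=\det(W')\det(U)$, and $\det(W')$ is a power of $i$ by the forward direction, so $\mu$ is a power of $i$ and $W'U$ is a power of $S$, in particular a Clifford operator. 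Hence $U=(W')\da\,(W'U)$ is a Clifford$+V$ circuit of $V$\!-count $k$, since inverting a circuit and composing with Clifford gates does not change the $V$\!-count.

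\emph{Minimality and efficiency.} Given any Clifford$+V$ circuit $W$ with $W=U$, write it as a word over $V_X,\ldots,V_Z\da,\omega,S,H$ and track the least $\sqrt5$-denominator exponent of the vector obtained by applying its generators successively to $e_1$: it starts at $0$, is unchanged by each Clifford generator (by the argument of Lemma~\ref{lem-l-invariant}, which involves only the $\sqrt5$-part), and rises by at most $1$ at each $V$\!-gate (by the argument of the Remark after Lemma~\ref{lem-column-l}). Thus the first column of $U=W$, and by the same token the second, has least $\sqrt5$-denominator exponent at most the $V$\!-count of $W$; taking a common denominator, so does $U$ itself. Hence no Clifford$+V$ circuit for $U$ has $V$\!-count below the least $\sqrt5$-denominator exponent of $U$, and applying this to $W'$ shows $k$ equals that least exponent, so the circuit produced above is optimal. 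Efficiency is immediate: Lemmas~\ref{lem-column-l} and~\ref{lem-column-k} require a bounded amount of Gaussian-integer arithmetic per step over $O(k)$ steps, with $k=O(\text{bit size of }U)$.

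\emph{Main obstacle.} All the real work is in Lemmas~\ref{lem-standard-form}--\ref{lem-column-k} and the Remark, so what remains is bookkeeping. The one delicate point is recognizing $W'U$ as a Clifford operator in the reverse direction: this is precisely where the hypothesis $\det U=i^n$ enters, and it is what upgrades "$V$\!-count $\leq k$" to "$V$\!-count $=k$, and minimal."
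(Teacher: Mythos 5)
Your proof is correct and follows essentially the same route as the paper: Lemma~\ref{lem-standard-form} plus a determinant observation for the forward direction, and Lemmas~\ref{lem-column-l} and~\ref{lem-column-k} applied to the first column for the reverse direction and algorithm, with the determinant hypothesis used exactly as in the paper to identify $W'U$ as a power of $S$. The only cosmetic difference is in the minimality step: you lower-bound the $V$\!-count by tracking the least $\sqrt5$-denominator exponent of the first column under the circuit's generators (using Lemma~\ref{lem-l-invariant} and the Remark), whereas the paper states the equivalent fact directly at the level of matrices (``any Clifford$+V$ circuit of $V$\!-count up to $k-1$ has least $\sqrt5$-denominator exponent at most $k-1$''); both reduce to the same per-generator bookkeeping, and both tacitly use that the least $\sqrt5$-exponent of a unitary of the form (\ref{eq-matrix}) with $\det U=i^n$ coincides with that of its first column, which follows from the relation between the two columns forced by unitarity and the determinant condition.
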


\begin{proof}
  The left-to-right implication follows from
  Lemma~\ref{lem-standard-form} and the observation that all the
  generators of the Clifford+$V$ group have determinant $i^n$ for some
  integer $n$. For the right-to-left implication, it suffices to show
  that there exists a Clifford+$V$ circuit $W$ of $V$\!-count $k$ such
  that $WU = I$, since we then have $U=W\da$. To construct $W$, apply
  Lemma~\ref{lem-column-l} and Lemma~\ref{lem-column-k} to the first
  column $u_1$ of $U$. This yields a circuit $W'$ such that the first
  column of $W'U$ is $e_1$. Since $W'U$ is unitary, it follows that
  its second column $u_2$ is a unit vector orthogonal to
  $e_1$. Therefore $u_2 = \lambda e_2$ where $\lambda$ is a unit of
  the Gaussian integers. Since the determinant of $W'$ is $i^m$ for
  some integer $m$, the determinant of $W'U$ is $i^{n+m}$, so that
  $\lambda=i^{n+m}$. Thus one of the following equalities must hold
  \begin{center}
    $W'U =I$, $ZW'U=I$, $SW'U=I$ or $ZSW'U =I$.
  \end{center}
  To prove the second claim, suppose that the least $\sqrt
  5$-denominator exponent of $U$ is $k$. Then $W$ can be efficiently
  computed because the algorithm described in the proofs of
  Lemma~\ref{lem-column-l} and Lemma~\ref{lem-column-k} requires
  $O(k)$ arithmetic operations.  Moreover, $W$ has $V$\!-count $k$ by
  Lemma~\ref{lem-column-k}, which is minimal since any Clifford+$V$
  circuit of $V$\!-count up to $k-1$ has least $\sqrt 5$-denominator
  exponent at most $k-1$.
\end{proof}

We conclude this section by noting that restricting $\ell$ to be equal
to 0 in (\ref{eq-matrix}) and the determinant of $U$ to be $\pm1$
yields a solution to the problem of exact synthesis in the Pauli+$V$
gate set.

\begin{proposition}
  \label{prop-exact-pauli}
  A unitary operator $U\in U(2)$ is exactly representable by a
  Pauli+$V$ circuit if and only if $U$ is of the form
  (\ref{eq-matrix}) with $\ell=0$ and $\det(U)=\pm 1$. Moreover, there
  exists an efficient algorithm that computes a Pauli+$V$ circuit for
  $U$ with $V$\!-count equal to the least $\sqrt 5$-denominator
  exponent of $U$, which is minimal.
\end{proposition}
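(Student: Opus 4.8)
The plan is to specialize the proof of Proposition~\ref{prop-exact-clifford}, keeping track of the two extra constraints $\ell=0$ and $\det(U)=\pm1$. For the left-to-right direction, I would first note that each generator of the Pauli+$V$ group is already of the form~(\ref{eq-matrix}) with $\ell=0$: the Pauli gates $X,Y,Z$ have Gaussian integer entries with trivial denominator, and each $V$-gate (and its adjoint) has the shape $\tfrac{1}{\sqrt5}M$ with $M$ over $\Z[i]$. Since the collection of matrices $\tfrac{1}{\rf{k}}M$ with $M$ over $\Z[i]$ is closed under multiplication, any product of these generators is again of the form~(\ref{eq-matrix}) with $\ell=0$. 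For the determinant, a direct computation gives $\det X=\det Y=\det Z=-1$ and $\det V_X=\det V_Y=\det V_Z=1$ (for instance $\det V_Z=\tfrac15(1+2i)(1-2i)=1$), and likewise for the $V$-gate adjoints; multiplicativity of the determinant then forces $\det(U)=\pm1$ for every Pauli+$V$ circuit $U$.

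For the right-to-left direction, suppose $U$ has the form~(\ref{eq-matrix}) with $\ell=0$, least $\sqrt5$-denominator exponent $k$, and $\det(U)=\pm1$. Since $\ell$ is already $0$, Lemma~\ref{lem-column-l} is not needed, and I would apply Lemma~\ref{lem-column-k} directly to the first column $u_1$ of $U$, obtaining a Pauli+$V$ circuit $W'$ of $V$-count $k$ with $W'u_1=e_1$. Then $W'U$ is unitary with first column $e_1$, so its second column is $\lambda e_2$ for some unit $\lambda\in\s{\pm1,\pm i}$ of $\Z[i]$, whence $\det(W'U)=\lambda$. Since $W'$ is a product of Pauli and $V$-gates, $\det(W')=\pm1$ by the first part, so $\det(W'U)=\pm1$ and therefore $\lambda\in\s{1,-1}$. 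Hence $W'U=I$ or $W'U=Z$, and in either case $U=(W')\da$ or $U=(W')\da Z$ exhibits $U$ as a Pauli+$V$ circuit of $V$-count $k$ — here I use that the adjoint of a Pauli+$V$ circuit is again one (the Pauli gates are Hermitian and the six $V$-gates are closed under adjoints) and that appending $Z$ does not change the $V$-count.

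For minimality and efficiency, I would argue as in Proposition~\ref{prop-exact-clifford}: starting from $I$, which has $\sqrt5$-denominator exponent $0$, multiplying by a Pauli leaves the $\sqrt5$-denominator exponent unchanged while multiplying by a $V$-gate increases it by at most $1$, so a Pauli+$V$ circuit of $V$-count $j$ admits a representation of the form~(\ref{eq-matrix}) with $\ell=0$ and $\sqrt5$-exponent at most $j$, hence has least $\sqrt5$-denominator exponent at most $j$; consequently no circuit for $U$ can have $V$-count below $k$, and the circuit constructed above, of $V$-count exactly $k$, is optimal. The construction inherits the $O(k)$ arithmetic-operation bound from Lemma~\ref{lem-column-k}.

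I do not anticipate a serious obstacle, since the argument is a faithful restriction of Proposition~\ref{prop-exact-clifford}. The one place deserving care is the determinant bookkeeping: one must check that $\det(U)=\pm1$, as opposed to a general power of $i$, is precisely what pins the residual unit $\lambda$ down to $\s{\pm1}$, thereby eliminating the $S$- and $ZS$-type corrections that arise in the Clifford+$V$ case, and that the Pauli+$V$ gate set is genuinely closed under the operations (adjoint, right multiplication by $Z$) used to finish the construction.
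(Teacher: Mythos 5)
Your proof is correct and is essentially a fully spelled-out version of what the paper does, which is simply to say the argument is ``analogous to Proposition~\ref{prop-exact-clifford}, using the algorithm of Lemma~\ref{lem-column-k}.'' You correctly identify that with $\ell=0$ fixed, Lemma~\ref{lem-column-l} drops out, the constraint $\det(U)=\pm1$ pins the residual unit $\lambda$ to $\pm1$ (so only the Pauli correction $Z$ rather than $S$ or $ZS$ is needed), and the minimality argument transfers verbatim.
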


\begin{proof}
  Analogous to the proof of Proposition~\ref{prop-exact-clifford},
  using the algorithm of Lemma~\ref{lem-column-k}.
\end{proof}

\section{Clifford+$V$ Approximate Synthesis of $z$-Rotations}
\label{sec-approx-synth}

In this section, we describe an algorithm to solve the problem of
approximate synthesis of $z$-rotations over the Clifford+$V$ gate set.

\begin{problem}
  \label{pb-approx-synth}
  Given an angle $\theta$ and a precision $\epsilon >0$, construct a
  Clifford+$V$ circuit $U$ whose $V$\!-count is as small as possible
  and such that $\norm{U-\Rz(\theta)}\leq \epsilon$.
\end{problem}

Our algorithm is adapted from the one developed in \cite{gridsynth}
for the Clifford+$T$ gate set. As in \cite{gridsynth}, we reduce
Problem~\ref{pb-approx-synth} to a pair of independent problems. From
Proposition~\ref{prop-exact-clifford}, we know that a unitary matrix
$U$ can be efficiently decomposed as a Clifford+$V$ circuit if and
only if
\begin{equation}
  U = \frac{1}{\rf k} \frac{1}{\rt \ell}
  \begin{pmatrix}
    \alpha    & \beta \\
    \gamma & \delta
  \end{pmatrix}, \quad \mbox{ with $k,\ell\in\N$,
    $\alpha,\beta,\gamma,\delta \in\Z[i]$, $0\leq \ell \leq 2$, and
    $\Det(U)=i^n$}.
\end{equation}
To solve Problem~\ref{pb-approx-synth}, we therefore need to find
$k,\ell\in\N$ and $\alpha,\beta,\gamma,\delta\in\Z[i]$ satisfying
these conditions and such that the resulting matrix $U$ approximates
$\Rz(\theta)$ up to $\epsilon$. The following lemma shows that we can
restrict our attention to matrices of determinant 1.

\begin{lemma}
  \label{lem-det-sol}
  If $\epsilon<|1-e^{i\pi/4}|$, then all solutions to
  Problem~\ref{pb-approx-synth} have the form
  \begin{equation}\label{eqn-u}
    U = \frac{1}{\rf k}\frac{1}{\rt \ell}
    \begin{pmatrix}
      \alpha & -\beta\da \\
      \beta & \alpha\da
    \end{pmatrix},
  \end{equation}
  with $k,\ell\in\N$, $\alpha,\beta\in\Z[i]$, and $0\leq \ell\leq
  2$. If $\epsilon\geq |1-e^{i\pi/4}|$, then there exists a solution
  of $V$\!-count 0 (i.e., a Clifford operator), and it is also of the
  form (\ref{eqn-u}).
\end{lemma}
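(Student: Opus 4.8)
The plan is to recast both cases in terms of determinants, using the following elementary observation: a unitary $U$ written in the form (\ref{eq-matrix}) has the special form (\ref{eqn-u}) if and only if $\det U = 1$. Indeed, if $\det U = 1$ then $U\da = U\inv = \operatorname{adj}(U)$, and comparing the entries of $U\da$ with those of the adjugate forces the bottom row of the integer matrix in (\ref{eq-matrix}) to be $(\beta,\alpha\da)$, with $\beta\in\Z[i]$ the negative conjugate of the top-right entry; conversely, (\ref{eqn-u}) together with the fact that the first column of a unitary is a unit vector gives $\det U = (|\alpha|^2+|\beta|^2)/(5^k2^\ell) = 1$. Since $\det\Rz(\theta) = e^{-i\theta/2}e^{i\theta/2} = 1$ and, by Proposition~\ref{prop-exact-clifford}, every solution $U$ of Problem~\ref{pb-approx-synth} is a unitary of the form (\ref{eq-matrix}) with $\det U = i^n$ for some $n\in\Z$, the first assertion reduces to showing that $\epsilon < |1-e^{i\pi/4}|$ forces $i^n = 1$.

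For that, I would pass from $U$ to $M = \Rz(\theta)\inv U$. Left multiplication by the unitary $\Rz(\theta)\inv$ preserves the operator norm, so $M$ is unitary with $\norm{M-I} = \norm{U-\Rz(\theta)}\le\epsilon$, and $\det M = \det U = i^n$. Writing the eigenvalues of the unitary $M$ as $e^{i\phi_1}, e^{i\phi_2}$ with $\phi_j\in(-\pi,\pi]$, applying $M-I$ to a unit eigenvector gives $|e^{i\phi_j}-1|\le\norm{M-I}\le\epsilon < |1-e^{i\pi/4}|$ for $j=1,2$. Since $|1-e^{i\phi}| = 2|\sin(\phi/2)|$ is increasing in $|\phi|$ on $(-\pi,\pi]$ and $|1-e^{i\pi/4}| = 2\sin(\pi/8)$, this yields $|\phi_j|<\pi/4$, hence $\phi_1+\phi_2\in(-\pi/2,\pi/2)$. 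But $e^{i(\phi_1+\phi_2)} = \det M = i^n = e^{in\pi/2}$, so $\phi_1+\phi_2$ is an integer multiple of $\pi/2$; the only one in the open interval $(-\pi/2,\pi/2)$ is $0$, so $i^n = 1$, i.e. $\det U = 1$, and $U$ has the form (\ref{eqn-u}) by the observation of the previous paragraph.

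For the second assertion, suppose $\epsilon\ge|1-e^{i\pi/4}|$ and choose an integer $n$ with $|\theta-n\pi/2|\le\pi/4$. Then $W:=\Rz(n\pi/2)$ equals $\omega^{-n}S^n$, so $W$ is a Clifford operator, hence a Clifford+$V$ circuit of $V$\!-count $0$; it is diagonal, of determinant $1$, with diagonal entries $e^{-in\pi/4}$ and $e^{in\pi/4}$ (which lie in $\Z[i]$ if $n$ is even and in $\frac{1}{\sqrt 2}\Z[i]$ if $n$ is odd), so it is of the form (\ref{eq-matrix}) and therefore, by the first paragraph, of the form (\ref{eqn-u}) with $\beta=0$. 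Finally $\norm{\Rz(\theta)-W} = |1-e^{i(n\pi/2-\theta)/2}| = 2\bigl|\sin\tfrac{n\pi/2-\theta}{4}\bigr| \le 2\sin(\pi/16) \le 2\sin(\pi/8) = |1-e^{i\pi/4}| \le \epsilon$, so $W$ is a solution of $V$\!-count $0$.

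The argument is essentially routine and parallels the corresponding step of \cite{gridsynth}. The only point that requires care is the passage from the per-eigenvalue bound $|e^{i\phi_j}-1|\le\epsilon$ to a usable bound on $\phi_1+\phi_2$ modulo $2\pi$; this is precisely where the threshold $|1-e^{i\pi/4}|$ enters, chosen so that each eigenphase of $\Rz(\theta)\inv U$ lies within $\pi/4$ of $0$, and hence their sum --- the argument of $\det U$ --- lies within $\pi/2$ of $0$, isolating the fourth root of unity $1$.
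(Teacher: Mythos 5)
Your proof is correct and follows essentially the same route as the paper: reduce the form \eqref{eqn-u} to the condition $\det U = 1$, then bound the eigenphases $\phi_1,\phi_2$ of $\Rz(\theta)\inv U$ by $\pi/4$ so that $\phi_1+\phi_2\in(-\pi/2,\pi/2)$ isolates $i^n=1$. (The paper phrases the last step as $|1-i^n|<\sqrt 2$; you observe directly that $0$ is the unique multiple of $\pi/2$ in the open interval --- the same fact.) One small point worth noting: your treatment of the case $\epsilon\geq|1-e^{i\pi/4}|$, taking $W=\Rz(n\pi/2)=\omega^{-n}S^n$ with $|\theta-n\pi/2|\leq\pi/4$, is actually cleaner than the paper's case split over $I,\ \omega^2,\ -I,\ -\omega^2$: since $\omega^2=iI$ has determinant $-1$, it is not of the form \eqref{eqn-u} and is not within $|1-e^{i\pi/4}|$ of $\Rz(\theta)$ for $\theta/2$ near $\pi/2$ (the paper presumably intended $\omega\inv S=\Rz(\pi/2)$ and its powers, which is exactly what your construction produces). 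Your uniform choice sidesteps this entirely and also handles the required form and the $\ell\in\{0,1\}$ denominator bookkeeping correctly.
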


\begin{proof}
  Every complex $2\by 2$ unitary operator $U$ can be written as
  \[
  U =
  \begin{pmatrix}
    a & -b\da e^{i\phi}\\
    b & a\da e^{i\phi}
  \end{pmatrix},
  \]
  for $a,b\in\C$ and $\phi\in[-\pi, \pi]$. This, together with the
  characterization of Clifford+$V$ operators given by
  Proposition~\ref{prop-exact-clifford}, implies that a complex $2\by
  2$ unitary operator $U$ can be exactly synthesized over the
  Clifford+$V$ basis if and only if
  \[
  U = \frac{1}{\rt k}\frac{1}{\rt \ell}
  \begin{pmatrix}
    \alpha    & -\beta\da i^n \\
    \beta & \alpha\da i^n
  \end{pmatrix},
  \]
  with $k,\ell,n\in\N$, $\alpha,\beta\in\Z[i]$, and $0\leq\ell\leq 2$.

  Now assume that $\epsilon<|1-e^{i\pi/4}|$ and
  $\norm{U-\Rz(\theta)}\leq \epsilon$. Let $e^{i\phi_1}$ and
  $e^{i\phi_2}$ be the eigenvalues of $U\Rz(\theta)\inv$, with
  $\phi_1,\phi_2\in[-\pi,\pi]$. Then
  \[
  |1-e^{i\pi/4}|>\epsilon\geq \norm{U-\Rz(\theta)} =
  \norm{I-U\Rz(\theta)\inv}=\max\s{|1-e^{i\phi_1}|,|1-e^{i\phi_2}|},
  \]
  so that $|1-e^{i\phi_j}|<|1-e^{i\pi/4}|$. Therefore
  $-\pi/4<\phi_j<\pi/4$, for $j\in\s{1,2}$, which implies that
  $-\pi/2<\phi_1+\phi_2<\pi/2$. Hence
  $|1-e^{i(\phi_1+\phi_2)}|<|1-e^{i\pi/2}|=\sqrt{2}$. But
  $e^{i(\phi_1+\phi_2)}=\det(U\Rz(\theta)\inv)=i^n$. Thus
  $|1-i^n|<\sqrt{2}$ which proves that $i^n=1$.
  
  For the last statement, note that if $\theta/2\in[-\pi/4,\pi/4]$,
  then $\norm{I-\Rz(\theta)}= |1-e^{i\theta/2}|\leq
  |1-e^{i\pi/4}|$. Similarly, if $\theta/2$ belongs to one of
  $[\pi/4,3\pi/4]$, $[3\pi/4,5\pi/4]$, or $[5\pi/4,7\pi/4]$, then one
  of $\norm{\omega^2-\Rz(\theta)}$, $\norm{-I-\Rz(\theta)}$, or
  $\norm{-\omega^2-\Rz(\theta)}$ is less than $|1-e^{i\pi/4}|$. In
  each case, $\Rz(\theta)$ is approximated to within $\epsilon$ by a
  Clifford operator.
\end{proof}

As a result of Lemma~\ref{lem-det-sol}, we know that to solve
Problem~\ref{pb-approx-synth}, it suffices to find $k,\ell\in\N$, with
$0\leq \ell \leq 2$, and $\alpha,\beta\in\Z[i]$ such that
$\alpha\da\alpha + \beta\da\beta = 5^k2^\ell$ and the resulting matrix
$U$ of the form (\ref{eqn-u}) approximates $\Rz(\theta)$ up to
$\epsilon$.  The key observation here is that, given $\epsilon$ and
$\theta$, we can express the requirement $\norm{U-\Rz(\theta)}\leq
\epsilon$ as a constraint on the top left entry
$\alpha/(\rf{k}\rt{\ell})$ of $U$. Indeed, let $z=e^{-i\theta/2}$,
$\alpha'=\alpha/(\rf{k}\rt{\ell})$, and
$\beta'=\beta/(\rf{k}\rt{\ell})$. Since ${\alpha'}\da
\alpha'+{\beta'}\da \beta'=1$ and $z\da z=1$, we have
\begin{align*}
  \norm{U - \Rz(\theta) }^2
  & = |\alpha'-z|^2 + |\beta'|^2 \\
  & = (\alpha'-z)\da(\alpha'-z) + {\beta'}\da\beta' \\
  & = {\alpha'}\da\alpha'+{\beta'}\da\beta' -
  z\da\alpha'-{\alpha'}\da z + z\da z  \\
  & = 2 - 2\Realpart(z\da \alpha').
\end{align*}
Thus $\norm{\Rz(\theta)-U}\leq \epsilon$ if and only if
$2-2\Realpart(z\da \alpha')\leq\epsilon^2$, or equivalently,
$\Realpart(z\da \alpha') \geq 1-\frac{\epsilon^2}{2}$. If we identify
the complex numbers $z=x+yi$ and $\alpha'=a+bi$ with 2-dimensional
real vectors $\vec z = (x,y)^T$ and $\vec \alpha' = (a,b)^T$, then
$\Realpart(z\da \alpha')$ is just their inner product $\vec z\cdot
\vec \alpha'$, and therefore $\norm{U - \Rz(\theta)}\leq \epsilon$ is
equivalent to
\begin{equation}
  \label{eqn-zalpha}
  \vec z\cdot \vec \alpha' \geq 1 - \frac{\epsilon^2}{2}.
\end{equation}
Moreover, ${\alpha'}\da \alpha'+{\beta'}\da \beta'=1$ implies that
${\alpha'}\da \alpha'= 1- {\beta'}\da \beta' \leq 1$ and therefore
that $\vec \alpha'$ is an element of the closed unit disk $\Disk$.
These two remarks jointly define a subset of the unit disk
\begin{equation}
  \label{eqn-Repsilon}
  \Repsilon= \s{\vec\alpha'\in\Disk 
    \such \vec z\cdot\vec\alpha'\geq 1-\frac{\epsilon^2}{2}}, 
\end{equation}
which we call the \emph{$\epsilon$-region} for $\theta$, such that if
$\alpha'\in\Repsilon$, then $\norm{U-\Rz(\theta)}\leq \epsilon$. In
the presence of $\alpha'=\alpha/(\rf{k}\rt{\ell})\in\Repsilon$, all
that remains is to find the other entry of $U$ by solving the
Diophantine equation
\[
\alpha\da\alpha + \beta\da\beta = 5^k2^\ell
\]
for some unknown $\beta\in\Z[i]$.

Now recall that we wish to solve Problem~\ref{pb-approx-synth}
optimally, so that we need to find an approximating matrix $U$ whose
$V$\!-count is as low as possible. We know from
Proposition~\ref{prop-exact-clifford} that the $V$\!-count of $U$ is
equal to its least $\sqrt 5$-denominator exponent. Therefore if we can
enumerate the points of $\Repsilon$ of the form
$\alpha/(\rf{k}\rt{\ell})$ for $\alpha\in\Z[i]$ and $0\leq \ell \leq
2$ in order of increasing $k$, then we can try to solve the
Diophantine equation for each such point. The first candidate for
which the Diophantine equation has a solution will then yield an
optimal solution to Problem~\ref{pb-approx-synth}.

Problem~\ref{pb-approx-synth} is therefore equivalent to the following
problem.

\begin{problem}
  \label{pb-approx-synth-bis}
  Given an angle $\theta$ and a precision $\epsilon>0$, find
  $k,\ell\in\N$ with $0\leq \ell \leq 2$ and $\alpha,\beta\in\Z[i]$
  such that:
  \begin{enumerate}[(i)]
  \item $\alpha/(\rf{k}\rt{\ell})\in\Repsilon$,
  \item $\alpha\da\alpha + \beta\da\beta = 5^k2^\ell$,
  \item and $k$ is as small as possible.
  \end{enumerate}
\end{problem}

In the above problem, the first two goals can be treated separately.

\begin{problem}[Scaled grid problem]
  \label{pb-scaled-grid}
  Given a bounded convex subset $A$ of $\R^2$ with non-empty interior,
  enumerate all points $\alpha/(\rf{k}\rt{\ell})\in A$, where
  $\alpha\in\Z[i]$, $k,\ell\in\N$, and $0\leq \ell \leq 2$, in order
  of increasing $(k,\ell)$.
\end{problem}

Each point $\alpha/(\rf{k}\rt{\ell})\in A$ is called a \emph{solution}
to the scaled grid problem for $A$ of denominator exponent $(k,\ell)$.

\begin{problem}[Diophantine equation]
  \label{pb-diophantine}
  Given $\alpha\in\Z[i]$ and $k,\ell\in\N$, find $\beta\in\Z[i]$ such
  that $\alpha\da\alpha +\beta\da\beta=5^k2^\ell$ if such a $\beta$
  exists.
\end{problem}

We now discuss methods to solve both of these problems. We provide an
algorithm for Problem~\ref{pb-approx-synth} and analyze its properties
in Section~\ref{ssect-algo-approx} and
Section~\ref{ssect-analysis-algo} respectively.

\subsection{Grid problems}
\label{ssect-grid-pbs}

In this subsection, we define an efficient algorithm to solve
Problem~\ref{pb-scaled-grid}. In what follows we refer to the set
$\Z^2\subseteq \R^2$ as the \emph{grid} and to elements of $\Z^2$ as
\emph{grid points}. The instances of the scaled grid problem where the
set $A$ is an upright rectangle, i.e., of the form $[x_1,x_2]\times
[y_1,y_2]$, are easy to solve. If $A$ is not an upright rectangle, the
problem can still be solved efficiently, provided that $A$ can be made
``upright enough''.

\begin{definition}[Uprightness]
  \label{def-upright}
  Let $A$ be a bounded convex subset of $\R^2$.  The bounding box of
  $A$, denoted $\BBox(A)$, is the smallest set of the form
  $[x_1,x_2]\by [y_1,y_2]$ that contains $A$. The \emph{uprightness of
    $A$}, denoted up(A), is defined to be the ratio of the area of A
  to the area of its bounding box:
  \[
  \up(A) =\frac{\area(A)}{\area(\BBox(A))}.
  \]
  We say that $A$ is $M$-upright if $\up(A) \geq M$.
\end{definition}

We will be especially interested in the case where the set $A$ is an
ellipse. Our interest in ellipses is motivated by the fact that a
bounded convex subset $A$ of the plane with non-empty interior can
always be enclosed in an ellipse whose area differs from that of $A$
by at most a constant factor. To increase the uprightness of a given
subset $A$ of the plane, we will then act on its ``enclosing ellipse''
using linear operators that map the grid to itself.

\begin{definition}[Ellipse]
  \label{def-ellipse}
  Let $D$ be a positive definite real $2\by 2$-matrix with non-zero
  determinant, and let $p\in\R^2$ be a point. The \emph{ellipse
    defined by $D$ and centered at $p$} is the set
  \[
  E = \s{u \in \R^2 \such (u-p)\da D(u -p)\leq 1}.
  \]
\end{definition}

\begin{proposition}
  \label{prop-enclosing-ellipse}
  Let $A$ be a bounded convex subset of $\R^2$ with non-empty
  interior.  Then there exists an ellipse $E$ such that $A\subseteq
  E$, and such that
  \[
  \area(E) \leq \frac{4\pi}{3\sqrt 3}\area(A).
  \]
\end{proposition}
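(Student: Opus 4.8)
The plan is to prove this by reducing to the well-known extremal result on inscribed ellipses --- the John ellipse / L\"owner ellipse theorem --- and then comparing areas. Recall that for any bounded convex body $K\subseteq\R^2$ with non-empty interior, there is a unique ellipse $E_{\min}$ of minimal area containing $K$ (the L\"owner ellipse). The classical fact I would invoke is that, after an affine change of coordinates taking $E_{\min}$ to the unit disk $\D$, the body $K$ contains the concentric disk of radius $1/2$, equivalently $\frac12 E_{\min}\subseteq K \subseteq E_{\min}$ (suitably centered). From this one immediately gets $\area(E_{\min}) \leq 4\,\area(\tfrac12 E_{\min}) \leq 4\,\area(K)$, which already gives a clean constant, but not the sharp $4\pi/(3\sqrt3)$ claimed. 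So the proof must do something smarter than the crude $\frac12$-scaling bound.

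To get the sharper constant $4\pi/(3\sqrt3)\approx 2.4184$, I would use the fact that the worst case for the ratio $\area(E_{\min})/\area(K)$ among all convex bodies is attained by a triangle. After an affine transformation we may take $K$ to be the equilateral triangle, for which the L\"owner ellipse (minimal enclosing ellipse) is exactly the circumscribed circle. Computing: an equilateral triangle inscribed in a circle of radius $R$ has area $\tfrac{3\sqrt3}{4}R^2$, while the circle has area $\pi R^2$, so the ratio is $\pi R^2 / (\tfrac{3\sqrt3}{4}R^2) = 4\pi/(3\sqrt3)$. Thus setting $E = E_{\min}(A)$ and using ``the triangle is extremal'' yields $\area(E)\leq \frac{4\pi}{3\sqrt3}\area(A)$, which is the claimed bound.

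Concretely, the steps I would carry out are: (1) State the existence of the L\"owner (minimal-area enclosing) ellipse $E$ of $A$ --- this can be cited or proved by a compactness argument on the space of ellipses containing $A$. (2) Reduce the area-ratio inequality to an affine-invariant statement: both $\area(E)/\area(A)$ and the property ``$E$ is the minimal enclosing ellipse of $A$'' are invariant under invertible affine maps, so it suffices to bound $\sup_A \area(E_{\min}(A))/\area(A)$ over all convex $A$. (3) Show this supremum is attained by (and only by) a triangle --- this is the heart of the matter. (4) Evaluate the ratio for the equilateral triangle and its circumscribed circle to get $4\pi/(3\sqrt3)$.

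The main obstacle is step (3): proving that the triangle is the extremal convex body. One route is to show that if $A$ is not a triangle then one can strictly shrink the enclosing ellipse or enlarge $A$ within a fixed $E$, contradicting extremality; more carefully, after normalizing $E_{\min}(A) = \D$, one uses the contact points of $A$ with $\partial\D$ --- by the John/L\"owner optimality conditions there are at least three such contact points whose convex hull contains the center, and one argues the ratio is maximized when $A$ equals the triangle spanned by exactly three such points inscribed symmetrically. Since this classical extremal fact is somewhat involved, in practice I would cite it (it is standard in convex geometry, going back to work on John's theorem in the plane) rather than reprove it, and then only the short computation in step (4) remains. If one is willing to accept the slightly weaker constant $4$ instead of $4\pi/(3\sqrt3)$, the whole argument collapses to the one-line $\area(E)\leq 4\area(\tfrac12 E)\leq 4\area(A)$ bound, which would also suffice for the algorithmic applications in the rest of the paper; but to match the stated proposition one needs the triangle-extremality input.
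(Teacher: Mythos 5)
Your proposal is mathematically correct. Note that the paper's own ``proof'' is simply a citation to Theorems 5.17 and 5.18 of \cite{gridsynth}, so there is no detailed in-paper argument to compare against; your sketch lays out the standard classical route that underlies that cited result. You correctly reduce to an affine-invariant statement, correctly identify the extremal constant $4\pi/(3\sqrt 3)$ as coming from the equilateral triangle and its circumscribed circle, and correctly observe that the cruder factor-$4$ bound from the planar L\"owner/John theorem (that is, $\frac{1}{2}E_{\min}\subseteq A\subseteq E_{\min}$) would already suffice for the complexity analysis elsewhere in the paper. The one step you cannot prove from scratch --- that the triangle maximizes the L\"owner-ellipse area ratio among all planar convex bodies, your step (3) --- is indeed the substantive extremal input; citing it is a legitimate way to close the argument and is no weaker than what the paper itself does by deferring the entire proposition to \cite{gridsynth}. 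If you wanted the write-up to be self-contained you would need to actually prove the triangle-extremality (e.g.\ via the contact-point characterization of the L\"owner ellipse), but as a blind reconstruction of the paper's level of rigor, your proposal matches it. One small addition worth making: the constant $4\pi/(3\sqrt 3)$ is sharp, being attained by triangles, so the bound cannot be improved at this level of generality.
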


\begin{proof}
  See theorems 5.17 and 5.18 of \cite{gridsynth}.
\end{proof}

The uprightness of an ellipse can be expressed in terms of the entries
of its defining matrix. Indeed, let $D$ be the positive definite
matrix defining some ellipse $E$ and assume that the entries of $D$
are as follows:
\[
D = \begin{pmatrix}
  a & b  \\
  b & d
\end{pmatrix}.
\]
We can compute the area of $E$ and the area of its bounding box using
$D$:
\[
\area(E)= \pi/\sqrt{\det(D)} \quad \mbox{and} \quad \area
(\BBox(E))=4\sqrt{ad} /\det(D).
\]
Thus by Definition~\ref{def-upright} we get:
\begin{equation}
  \label{eqn-upright-ellipse}
  \up(E) = \frac{\area(E)}
  {\area(\BBox(E))}
  = \frac{\pi}{4} 
  \sqrt{
    \frac{\det(D)}{ad} 
  }.
\end{equation}
The uprightness of $E$ is invariant under translation and scalar
multiplication.

\begin{definition}[Grid operator]
  \label{def-grid-operator}
  A \emph{grid operator} is an integer matrix, or equivalently, a
  linear operator, that maps $\Z^2$ to itself. A grid operator $\G$ is
  called \emph{special} if it has determinant $\pm 1$, in which case
  $G\inv$ is also a grid operator.
\end{definition}

\begin{remark}
  \label{rem-grid-of-ell}
  If $A$ is a subset of $\R^2$ and $\G$ is a grid operator, then
  $\G(A)$, the direct image of $A$, is defined as usual by
  $\G(A)=\s{\G(v)\such v\in A}$. If $\G$ is a grid operator and $E$ is
  an ellipse centered at the origin and defined by $D$, then $\G(E)$
  is an ellipse defined by $(\G\inv)\da D\G\inv$.
\end{remark}

\begin{proposition}
  \label{prop-to-upright}
  Let $E$ be an ellipse defined by $D$ and centered at $p$. There
  exists a grid operator $\G$ such that $\G(E)$ is $1/2$-upright.
  Moreover, if $E$ is $M$-upright, then $\G$ can be efficiently
  computed in $O(\log(1/M))$ arithmetic operations.
\end{proposition}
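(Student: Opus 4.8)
The plan is to reduce the problem to a statement about binary quadratic forms and use a reduction algorithm in the spirit of Gauss reduction (or continued fractions) to make the form nearly balanced, which by \eqref{eqn-upright-ellipse} is exactly the condition that the ellipse be sufficiently upright. First I would use Remark~\ref{rem-grid-of-ell}: since uprightness is translation-invariant, I may assume $p$ is the origin, and a grid operator $\G$ sends the ellipse defined by $D$ to the ellipse defined by $(\G\inv)\da D \G\inv$. Thus the task is purely about the congruence action of the group of special grid operators (which contains $\left(\begin{smallmatrix}1&1\\0&1\end{smallmatrix}\right)$, $\left(\begin{smallmatrix}1&0\\1&1\end{smallmatrix}\right)$, and $\left(\begin{smallmatrix}0&-1\\1&0\end{smallmatrix}\right)$) on the positive definite matrix $D=\left(\begin{smallmatrix}a&b\\b&d\end{smallmatrix}\right)$, and by \eqref{eqn-upright-ellipse} the goal $\up(\G(E))\geq 1/2$ is equivalent to $\det(D)/(a'd') \geq 4/\pi^2$ for the transformed entries $a',d'$; since $4/\pi^2 < 1/2$, it suffices to reach $ad \leq 2\det(D)$, i.e. $b^2 \leq \det(D)$, i.e. $|b| \leq \sqrt{ad - b^2}$.

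Next I would describe the reduction loop. Using the swap $\left(\begin{smallmatrix}0&-1\\1&0\end{smallmatrix}\right)$ we may always assume $a \leq d$. Then, conjugating by the shear $\left(\begin{smallmatrix}1&n\\0&1\end{smallmatrix}\right)$ replaces $(a,b,d)$ by $(a,\, b+na,\, d + 2nb + n^2 a)$; choosing $n = -\lfloor b/a \rceil$ (nearest integer) forces the new off-diagonal entry to satisfy $|b'| \leq a/2$. If after this step we still have $a > d'$, we swap and repeat. Each such round strictly decreases $a+d$ (when a swap actually occurs, because $d' = d + 2nb + n^2 a$ with $|b'|\le a/2$ and $a>d'$ forces $d' < a \le d$), so the process terminates; when it halts we have $a \leq d$ and $|b| \leq a/2 \leq \sqrt{ad}/2$, hence $b^2 \leq ad/4 \leq \det(D) + b^2 - \tfrac34 ad$, giving $b^2 \le ad/4 \le \det(D)/(1-1/4)\cdot\ldots$ — more simply $4b^2 \le a^2 \le ad$ so $\det(D) = ad - b^2 \ge ad - ad/4 = \tfrac34 ad \ge 3 b^2 \cdot\ldots$; in any case $b^2 \le ad - b^2 = \det(D)$ is immediate once $4b^2 \le ad$, which is what we need. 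Accumulating the product of all the shears and swaps applied gives the desired special grid operator $\G$.

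For the complexity claim, I would observe that the reduction just described is the Euclidean/Gauss algorithm on the quadratic form, so the number of rounds is $O(\log(d/a_{\min}))$ in the worst case, which is logarithmic in the ratio of the axes of $E$. The uprightness of $E$ is, by \eqref{eqn-upright-ellipse}, $\up(E) = \tfrac{\pi}{4}\sqrt{\det(D)/(ad)}$, and since $\det(D) \le ad$ and $\det(D) = a d - b^2$, a small value of $\up(E)$ — that is, $\up(E) = M$ — forces the eccentricity, equivalently the ratio of largest to smallest eigenvalue of $D$ (which controls $d/a$ up to a factor bounded by the already-present skew), to be $O(1/M^2)$; hence $\log(d/a) = O(\log(1/M))$ and the loop runs $O(\log(1/M))$ times, each round costing a constant number of arithmetic operations. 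The main obstacle I anticipate is the last point: carefully relating the combinatorial quantity that the Gauss reduction decreases (the ratio $d/a$, or $a+d$ with $\det(D)$ fixed) to the analytic uprightness parameter $M$, so that the number of iterations is genuinely $O(\log(1/M))$ rather than merely finite; this requires bounding how the shears can initially inflate $a$ and $d$ and showing that one can instead track a monovariant (e.g. $\min(a,d)$ is non-increasing, or $a+d$ decreases geometrically) whose logarithm is controlled by $1/M$.
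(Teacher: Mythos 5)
Your reduction loop is essentially the paper's proof: normalize to the origin, then Gauss-reduce the positive definite form with nearest-integer shears (the paper uses $A=\left(\begin{smallmatrix}1&1\\0&1\end{smallmatrix}\right)$ when $a\leq d$ and $B=\left(\begin{smallmatrix}1&0\\1&1\end{smallmatrix}\right)$ when $d<a$, rather than an explicit swap, but this is cosmetic), and stop once $|b|\leq\min(a,d)/2$, which by \eqref{eqn-upright-ellipse} gives uprightness $\geq\pi/(4\sqrt 2)\geq 1/2$.

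The gap you flag at the end is real, and it is worth noting how the paper closes it, because the fix is short. Instead of tracking $a+d$ or $d/a$, normalize so that $\det(D)=1$ (uprightness is invariant under scaling $D$) and track the \emph{skew} $\sk(E):=b^2$ directly. After one nearest-integer shear with $a\leq d$, the new off-diagonal $b'=na+b$ satisfies
\[
(b')^2 \leq \frac{a^2}{4}\leq\frac{ad}{4}=\frac{1+b^2}{4}\leq\frac{1}{2}b^2\qquad\text{whenever }b^2\geq 1,
\]
so the skew at least halves per step until $b^2\leq 1$, at which point $\alpha\delta=1+\beta^2\leq 2$ and $\up\geq\pi/(4\sqrt2)\geq 1/2$. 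This gives at most $\log_2\sk(E)$ rounds with no need to reason about a swap decreasing $a+d$. The link to $M$ is then immediate from the same identity: $M\leq\up(E)=\pi/(4\sqrt{ad})=\pi/(4\sqrt{b^2+1})$, so $\sk(E)=b^2\leq\pi^2/(16M^2)-1$ and the round count is $O(\log(1/M))$. Your version does terminate and does yield $\G$, but without the $\det(D)=1$ normalization and the halving of $b^2$ you do not quite get the quantitative $O(\log(1/M))$ bound you would need for the second sentence of the proposition, which is precisely the obstacle you identified.
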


\begin{proof}
  If $E$ is an ellipse defined by a matrix $D$, we write $\sk(E)$ for
  the product of the anti-diagonal entries of $D$. Let $A$ and $B$ be
  the following special grid operators:
  \[
  A = \begin{pmatrix}
    1 & 1 \\
    0 & 1
  \end{pmatrix}
  , \quad B = \begin{pmatrix}
    1 & 0 \\
    1 & 1
  \end{pmatrix},
  \]
  and consider an arbitrary ellipse $E$. Since uprightness is
  invariant under translation and scaling, we may without loss of
  generality assume that $E$ is centered at the origin and that $D$
  has determinant 1. Suppose moreover that the entries of $D$ are as
  follows:
  \[
  \begin{pmatrix}
    a & b \\
    b & d
  \end{pmatrix}
  \]
  We first show that there exists a grid operator $\G$ such that
  $\sk(G(E))\leq 1$. Indeed, assume that $\sk(E) = b^2 \geq 1$. In
  case $a\leq d$, choose $n$ such that $|na+b|\leq a/2$. Then we have:
  \[
  {A^n}\da D A^n = \begin{pmatrix}
    \cdots & na+b \\
    na+b & \cdots
  \end{pmatrix}.
  \]
  Therefore, using Remark~\ref{rem-grid-of-ell} with $\G_1=(A^n)\inv$,
  we have:
  \[
  \sk(\G_1(E)) = (na+b)^2 \leq \frac{a^2}{4} \leq \frac{ad}{4} =
  \frac{1+b^2}{4} = \frac{1+\sk (E)}{4} \leq \frac{2~\sk (E)}{4} =
  \frac{1}{2}\sk(E).
  \]
  Similarly, in case $d<a$, then choose $n$ such that $|nd+b|\leq
  d/2$.  A similar calculation shows that in this case, with
  $\G_1=(B^n)\inv$, we get $\sk(\G_1(E)) \leq \frac{1}{2}\sk(E)$. In
  both cases, the skew of $E$ is reduced by a factor of 2 or
  more. Applying this process repeatedly yields a sequence of
  operators $\G_1,\ldots,\G_m$ and letting
  $\G=\G_m\cdot\ldots\cdot\G_1$ we find that $\sk(\G(E)) \leq 1$.
  
  Now let $D'$ be the matrix defining $\G(E)$, with entries as
  follows:
  \[
  D' =\begin{pmatrix}
    \alpha & \beta \\
    \beta & \delta
  \end{pmatrix}.
  \]
  Then $\sk(\G(E)) \leq 1$ implies that $\beta^2\leq 1$. Moreover,
  since $A$ and $B$ are special grid operators we have
  $\det(D')=\alpha\delta -\beta^2=1$. Using the expression
  (\ref{eqn-upright-ellipse}) for the uprightness of $\G(E)$ we get
  the desired result:
  \[
  \up(\G(E)) = \frac{\pi}{4} \sqrt{\frac{\Det(D')}{\alpha\delta}} =
  \frac{\pi}{4\sqrt{\alpha\delta}} = \frac{\pi}{4\sqrt{\beta^2+1}}
  \geq \frac{\pi}{4\sqrt{2}} \geq \frac{1}{2}.
  \]

  Finally, to bound the number of arithmetic operations, note that
  each application of $\G_j$ reduces the skew by at least a factor of
  2. Therefore, the number $n$ of grid operators required satisfies
  $n\leq \log_2(\sk(E))$. Now note that since $D$ has determinant 1,
  we have:
  \[
  M \leq \up(E) = \frac{\pi}{4}\frac{1}{\sqrt{ad}} =
  \frac{\pi}{4\sqrt{b^2+1}}.
  \]
  Therefore $\sk(E)=b^2\leq (\pi^2/16M^2)-1$, so that the computation
  of $\G$ requires $O(\log(1/M))$ arithmetic operations.
\end{proof}

We can now describe our algorithm to solve
Problem~\ref{pb-scaled-grid}. The algorithm inputs a bounded convex
set $A$ and we start by outlining the way in which the set $A$ is
given.

\begin{remark}
  \label{rem-ellipse}
  In the case of the present paper, a bounded convex set $A$ is
  \emph{given} if the following assumptions are satisfied.
  \begin{enumerate}[(i)]
  \item We are given an enclosing ellipse for $A$, whose area exceeds
    the area of $A$ by no more than a constant factor (such an ellipse
    exists by Proposition~\ref{prop-enclosing-ellipse}).
  \item We can efficiently decide, given $\alpha\in\Z[i]$ and
    $k,\ell\in\N$, whether or not $\alpha/\rf{k}\rt{\ell}$ belongs to
    $A$.
  \item We can efficiently compute the intersection of any straight
    line in $\Z[i, 1/\sqrt{5}, 1/\sqrt{2}]$ and $A$.
  \end{enumerate}
\end{remark}

\begin{proposition}
  \label{prop-algo-scaled-grid-fixed}
  There is an algorithm which, given a bounded convex subset $A$ of
  $\R^2$ with non-empty interior, enumerates all solutions of the grid
  problem for $A$ in order of increasing $(k,\ell)$. Moreover, if $A$
  is $M$-upright, then the algorithm requires $O(\log(1/M))$
  arithmetic operations overall, plus a constant number of arithmetic
  operations per solution produced.
\end{proposition}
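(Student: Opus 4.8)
The plan is to reduce the general scaled grid problem to the upright-rectangle case via a grid operator, as foreshadowed in the text. First I would handle the easy base case: when $A$ is an upright rectangle $[x_1,x_2]\times[y_1,y_2]$, a point $\alpha/(\rf k\rt\ell)$ with $\alpha=u+iv$, $u,v\in\Z$, lies in $A$ precisely when $u/(\rf k\rt\ell)\in[x_1,x_2]$ and $v/(\rf k\rt\ell)\in[y_1,y_2]$; so for each fixed $(k,\ell)$ the solutions of that denominator exponent are exactly the integer pairs in a scaled axis-aligned box, which can be enumerated directly (with the usual care that one enumerates only those $\alpha$ not already arising at a smaller exponent, i.e.\ $\alpha$ whose least denominator exponent is exactly $(k,\ell)$). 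Iterating over $(k,\ell)$ in increasing order — say lexicographically, or by the natural order coming from the value $\rf k\rt\ell$ — yields the required enumeration, at a constant number of arithmetic operations per solution.

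Next I would treat general $A$. By hypothesis (Remark~\ref{rem-ellipse}(i)) we are handed an enclosing ellipse $E\supseteq A$ with $\area(E)\le c\cdot\area(A)$. By Proposition~\ref{prop-to-upright} there is a grid operator $\G$, computable in $O(\log(1/M'))$ arithmetic operations where $M'$ is the uprightness of $E$, such that $\G(E)$ is $1/2$-upright; and since $M'\ge \up(A)\cdot(\area(A)/\area(E))\ge M/c$, this is $O(\log(1/M))$ operations. Now I apply $\G$ to the whole problem: because $\G$ and $\G\inv$ both map $\Z[i]$ to $\Z[i]$ (identifying $\Z[i]$ with $\Z^2$) and commute with scalar multiplication by $1/(\rf k\rt\ell)$, the map $\alpha\mapsto \G\alpha$ is a bijection between solutions of denominator exponent $(k,\ell)$ for $A$ and those for $\G(A)$, preserving the denominator exponent. (Here one must check that $\G$ does not change the \emph{least} denominator exponent: if $\G\alpha/(\rf k\rt\ell)$ had a smaller representation, applying $\G\inv$ — also a grid operator — would give one for $\alpha$, contradiction; this is the same argument as in the Remark following Lemma~\ref{lem-column-l}.) So it suffices to enumerate the solutions for $\G(A)$.

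For $\G(A)$ I would enumerate candidates by working with the bounding box $\BBox(\G(E))\supseteq \G(A)$, which is an upright rectangle: run the base-case enumeration on $\BBox(\G(E))$ in order of increasing $(k,\ell)$, and for each candidate $\alpha/(\rf k\rt\ell)$ test membership in $\G(A)$ using the oracle of Remark~\ref{rem-ellipse}(ii) (equivalently, transport the test back to $A$), discarding those that fail. The $1/2$-uprightness of $\G(E)$ is what makes this efficient: since $\area(\G(E))=\area(E)\ge$ a constant times $\area(A)=\area(\G(A))$ and $\area(\BBox(\G(E)))\le 2\,\area(\G(E))$, the box $\BBox(\G(E))$ has area at most a constant times $\area(\G(A))$, so at each exponent $(k,\ell)$ the number of candidates in the box exceeds the number of genuine solutions in $\G(A)$ by at most a constant factor — hence the total work is $O(\log(1/M))$ for computing $\G$, plus $O(1)$ arithmetic operations per solution produced, as claimed. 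The one subtlety I would flag as the main obstacle is making precise this counting step — that bounded uprightness really does give a constant ratio of lattice points in the box to lattice points in $\G(A)$, uniformly in $(k,\ell)$; this requires a lattice-point versus area estimate (solutions of exponent $(k,\ell)$ correspond to lattice points of the scaled lattice $(\rf k\rt\ell)\inv\Z^2$ inside the region), valid once the scaling is fine enough, together with a separate direct check of the finitely many small exponents. The details are routine but this is where the efficiency claim is actually earned, and it is carried out in the corresponding argument of \cite{gridsynth}.
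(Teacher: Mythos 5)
Your proposal matches the paper's proof: both use the enclosing ellipse from Remark~\ref{rem-ellipse}, apply Proposition~\ref{prop-to-upright} to obtain a grid operator $\G$ making $\G(E)$ $1/2$-upright, enumerate grid points in $\BBox(\G(E))$ in order of increasing $(k,\ell)$, and test membership in $\G(A)$. The paper states this more tersely and leaves implicit the two points you rightly flag — that $\G$ preserves least denominator exponents (since $\G\inv$ is also a grid operator) and that $1/2$-uprightness plus the constant-factor enclosing ellipse gives the constant candidates-per-solution ratio via a lattice-point count — so your version is a faithful, slightly more careful rendering of the same argument.
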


\begin{proof}
  Given $A$ as in Remark~\ref{rem-ellipse}, with an enclosing ellipse
  $A'$ whose area only exceeds that of $A$ by a fixed constant factor
  $N$, use Proposition~\ref{prop-to-upright} to find a grid operator
  $\G$ such that $G(A')$ is $1/2$-upright. Then, enumerate the grid
  points of $\BBox(G(A'))$ in order of increasing $(k,\ell)$. This can
  be done efficiently since $\BBox(G(A'))$ is an upright
  rectangle. For each grid point $u$ found, check whether it belongs
  to $G(A)$. This is the case if and only if $G\inv(u)$ is a solution
  to the grid problem for $A$ with denominator exponent $(k,\ell)$.
\end{proof}

\subsection{Diophantine equations}
\label{ssect-dioph}

There is a well-known algorithm to solve Problem~\ref{pb-diophantine},
i.e., to solve the equation:
\begin{equation}\label{eqn-decomp1}
  \alpha\da\alpha + \beta\da\beta = 5^k2^\ell,
\end{equation}
for $\beta\in\Z[i]$, givenwhere $\alpha \in \Z[i]$ and
$k,\ell\in\N$. First note that if we write
$n=5^k2^\ell-\alpha\da\alpha$ and $\beta=b+ic$, where $n,b,c\in\Z$,
then Eq.~(\ref{eqn-decomp1}) is equivalent to
\begin{equation}\label{eqn-decomp}
  n = b^2+c^2.
\end{equation}
The solutions to Eq.~(\ref{eqn-decomp}) were characterized by Euler:

\begin{proposition}[Euler \cite{Euler}]
  \label{prop-algo-diophantine}
  Let $n$ be a positive integer with prime factorization
  $p_1^{k_1}\ldots p_m^{k_m}$, where $p_1,\ldots, p_m$ are distinct
  positive primes. Then $n$ can be written as the sum of two squares
  if and only if for all $i$ either $k_i$ is even or $p_i\equiv 1,2
  ~(\mymod 4)$.
\end{proposition}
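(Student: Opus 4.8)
The plan is to recast the problem in terms of the ring of Gaussian integers $\Z[i]$ equipped with the norm $N(a+ib) = a^2+b^2 = (a+ib)\da(a+ib)$. Since $N$ is multiplicative and $\Z[i]$ is a Euclidean domain (hence a UFD with units $\pm1,\pm i$), the statement reduces to bookkeeping about which rational primes arise as norms of Gaussian primes. The first step is the trivial reformulation: $n$ is a sum of two squares if and only if $n = N(\gamma)$ for some $\gamma \in \Z[i]$, since $b^2+c^2 = N(b+ic)$.

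Next I would classify the Gaussian primes according to the residue modulo $4$ of the rational prime lying below them. (a) $N(1+i)=2$ and $2=-i(1+i)^2$, so $1+i$ is, up to units, the unique Gaussian prime above $2$. (b) If $p\equiv 3\mmod 4$, then $p$ remains prime in $\Z[i]$: a nontrivial factorization $p=\pi\pi'$ would force $N(\pi)=p$, i.e.\ $p=a^2+b^2$, impossible since squares are $\equiv 0,1\mmod 4$; hence the Gaussian prime above $p$ has norm $p^2$. (c) If $p\equiv 1\mmod 4$, then $-1$ is a quadratic residue mod $p$, so $p\mid m^2+1=(m+i)(m-i)$ for some integer $m$ while $p\nmid m\pm i$; thus $p$ is not prime in $\Z[i]$ and factors as $p=\pi\bar\pi$ with $N(\pi)=N(\bar\pi)=p$.

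With this classification, both directions follow. For the ``if'' direction, given the hypothesis, I build $\gamma$ as a product of Gaussian primes: take $(1+i)^{k_i}$ for the $p_i=2$ part, $\pi_i^{k_i}$ for each $p_i\equiv 1\mmod 4$, and $p_i^{k_i/2}$ for each $p_i\equiv 3\mmod 4$ (where $k_i$ is even by assumption). Multiplicativity of $N$ gives $N(\gamma)=\prod_i p_i^{k_i}=n$, so $n$ is a sum of two squares. For the ``only if'' direction, suppose $n=N(\gamma)$ and factor $\gamma$ into Gaussian primes; taking norms and using unique factorization in $\Z$, every rational prime $p_i\equiv 3\mmod 4$ can enter $n$ only through Gaussian primes above $p_i$, each contributing a factor $p_i^2$, so $k_i$ must be even.

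The main obstacle, and really the only nontrivial input, is step (c): that $-1$ is a quadratic residue modulo every prime $p\equiv 1\mmod 4$, equivalently that such $p$ split in $\Z[i]$. I would invoke this as a standard fact (it follows from Wilson's theorem, or from the cyclicity of $(\Z/p\Z)^\times$); alternatively, steps (b)--(c) can be bypassed by Thue's pigeonhole lemma applied to the congruence $mx\equiv y\mmod p$. Everything else is routine manipulation of norms in a UFD.
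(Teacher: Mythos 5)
Your argument is correct and complete. Note, however, that the paper does not actually prove this proposition: its ``proof'' is a one-line citation to Theorem~366 of Hardy and Wright. Your self-contained proof via the Gaussian integers is the standard algebraic route: reduce to multiplicativity of the norm $N(a+ib)=a^2+b^2$ on the unique-factorization domain $\Z[i]$, classify the Gaussian primes according to the splitting of $2$ (ramified), of primes $p\equiv 1\mmod 4$ (split, using that $-1$ is a quadratic residue), and of primes $p\equiv 3\mmod 4$ (inert), and then read off both directions from the Gaussian prime factorization of any $\gamma$ with $N(\gamma)=n$. This is arguably the most natural proof in the context of this paper, which already works systematically with $\Z[i]$ and the quantity $\alpha\da\alpha+\beta\da\beta$. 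Hardy and Wright's treatment, by contrast, is elementary, deriving the general criterion from Fermat's two-squares theorem for primes together with the Brahmagupta--Fibonacci composition identity rather than passing through factorization in $\Z[i]$; both routes rest on the same arithmetic inputs (chiefly the quadratic character of $-1$ mod $p$), and your version simply repackages them in ring-theoretic language.
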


\begin{proof}
  See Theorem~366 of \cite{Hardy-Wright}.
\end{proof}

Moreover, in case the equation $n=b^2+c^2$ has a solution, there is an
efficient probabilistic algorithm for finding $b$ and $c$, given a
prime factorization for $n$, see \cite{Rabin-Shallit}.

\subsection{The approximate synthesis algorithm}
\label{ssect-algo-approx}

We can now describe our algorithm to solve
Problem~\ref{pb-approx-synth}.

\begin{algorithm}
  \label{alg-main}
  Given $\theta$ and $\epsilon$, let $A=\Repsilon$ be the
  $\epsilon$-region as defined in Eq.~(\ref{eqn-Repsilon}).
  \begin{enumerate}[(i)]
  \item Use Proposition~\ref{prop-algo-scaled-grid-fixed} to enumerate
    the infinite sequence of solutions $\alpha/(\rf{k}\rt{\ell})$ to
    the scaled grid problem for $A$ in order of increasing least
    denominator exponent $(k,\ell)$.
  \item For each such solution $\alpha/(\rf{k}\rt{\ell})$ of least
    denominator exponent $(k,\ell)$:
    \begin{enumerate}
    \item[(a)] Let $n=5^k2^\ell - \alpha\da \alpha$.
    \item[(b)] Attempt to find a prime factorization of $n$. If $n\neq
      0$ but no prime factorization is found, skip step~(ii.c) and
      continue with the next $\alpha$.
    \item[(c)] Use the algorithm of Section~\ref{ssect-dioph} to solve
      the equation $\beta\da \beta = n$. If a solution $\beta$ exists,
      go to step~(iii); otherwise, continue with the next $\alpha$.
    \end{enumerate}
  \item Define $U$ as in Eq.~{\eqref{eqn-u}} and use the exact
    synthesis algorithm of Proposition~\ref{prop-exact-clifford} to
    find a Clifford+$V$ circuit for $U$. Output this circuit and stop.
  \end{enumerate}
\end{algorithm}

\begin{remark}
  \label{rem-pauli}
  By restricting $\ell$ to be equal to 0 throughout the algorithm and
  using Proposition~\ref{prop-exact-pauli} in step~(iii), we obtain a
  method for the approximate synthesis of $z$-rotations in the
  Pauli+$V$ basis.
\end{remark}

\subsection{Analysis of the algorithm}
\label{ssect-analysis-algo}

We now discuss the properties of Algorithm~\ref{alg-main}. The
restricted algorithm of Remark~\ref{rem-pauli} can be seen to enjoy
the same properties.

\subsubsection{Correctness}
\label{sssect-correctness}

\begin{proposition}
  \label{prop-correctness}
  If Algorithm~\ref{alg-main} terminates, then it yields a valid
  solution to the approximate synthesis problem, i.e., it yields a
  Clifford+$V$ circuit approximating $\Rz(\theta)$ up to $\epsilon$.
\end{proposition}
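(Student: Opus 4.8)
The plan is to trace through the three steps of Algorithm~\ref{alg-main} and verify that whatever circuit it outputs genuinely satisfies $\norm{U - \Rz(\theta)} \leq \epsilon$. The argument is essentially a matter of checking that the constraints maintained by the algorithm are exactly the ones needed to invoke the earlier reductions, so there is no serious obstacle — the work is in citing the correct earlier results in the correct order.

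First I would note that the algorithm, upon termination, has produced a solution $\alpha/(\rf{k}\rt{\ell})$ to the scaled grid problem for $A = \Repsilon$, together with a $\beta\in\Z[i]$ satisfying $\beta\da\beta = n = 5^k2^\ell - \alpha\da\alpha$, i.e., $\alpha\da\alpha + \beta\da\beta = 5^k2^\ell$. By Proposition~\ref{prop-algo-scaled-grid-fixed}, the point $\alpha/(\rf{k}\rt{\ell})$ lies in $\Repsilon$, and by definition of $\Repsilon$ in Eq.~(\ref{eqn-Repsilon}) this means the vector $\vec\alpha'$ associated to $\alpha' = \alpha/(\rf{k}\rt{\ell})$ satisfies $\vec z\cdot\vec\alpha' \geq 1 - \epsilon^2/2$, where $\vec z$ corresponds to $z = e^{-i\theta/2}$.

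Next I would form the matrix $U$ as in Eq.~(\ref{eqn-u}) and check two things. First, $U$ is genuinely of the form (\ref{eq-matrix}) with determinant a power of $i$ (indeed determinant $1$): the Diophantine condition $\alpha\da\alpha + \beta\da\beta = 5^k2^\ell$ is precisely the unitarity condition, and the $(-\beta\da, \alpha\da)$ pattern forces $\det U = 1$, so by Proposition~\ref{prop-exact-clifford} the exact synthesis step~(iii) succeeds and returns a Clifford+$V$ circuit equal to $U$. Second, the approximation bound: since $U$ has the special form (\ref{eqn-u}), the computation carried out in Section~\ref{sec-approx-synth} preceding Eq.~(\ref{eqn-zalpha}) applies verbatim, giving $\norm{U - \Rz(\theta)}^2 = 2 - 2\Realpart(z\da\alpha')$, and the condition $\vec z\cdot\vec\alpha' \geq 1 - \epsilon^2/2$ from the previous paragraph is exactly $\Realpart(z\da\alpha') \geq 1 - \epsilon^2/2$, which rearranges to $\norm{U - \Rz(\theta)}^2 \leq \epsilon^2$. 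Hence the output circuit approximates $\Rz(\theta)$ to within $\epsilon$, as required.

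The only point needing a word of care is that the manipulation leading to Eq.~(\ref{eqn-zalpha}) used $\alpha'{}\da\alpha' + \beta'{}\da\beta' = 1$, which holds precisely because of the Diophantine equation the algorithm solved in step~(ii.c); so correctness of the approximation genuinely depends on both step~(i) (membership in $\Repsilon$) and step~(ii) (the norm identity), and I would state this explicitly. I expect no real obstacle here — the proposition is a bookkeeping consequence of the reductions already established, and the proof is short.
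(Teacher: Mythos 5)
Your proof is correct and takes exactly the route the paper intends: the paper's proof is simply the one-line remark ``By construction, following the reduction of Problem~\ref{pb-approx-synth} to Problem~\ref{pb-approx-synth-bis},'' and your argument spells out that reduction in full, verifying membership in $\Repsilon$, the Diophantine identity, and the norm computation preceding Eq.~(\ref{eqn-zalpha}). No gap, same approach, just more explicit.
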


\begin{proof}
  By construction, following the reduction of
  Problem~\ref{pb-approx-synth} to Problem~\ref{pb-approx-synth-bis}.
\end{proof}

\subsubsection{Optimality in the presence of a factoring oracle}
\label{sssect-optimality}

\begin{proposition}
  \label{prop-optimality}
  In the presence of an oracle for integer factoring, the circuit
  returned by Algorithm~\ref{alg-main} has the smallest $V$\!-count of
  any single-qubit Clifford+$V$ circuit approximating $\Rz(\theta)$ up
  to $\epsilon$.
\end{proposition}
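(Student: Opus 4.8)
The plan is to establish optimality by showing that the algorithm examines \emph{all} candidate denominator exponents in increasing order and does not skip any viable one, so the first circuit it outputs is necessarily of minimal $V$-count. Concretely, I would argue as follows. First, recall from Lemma~\ref{lem-det-sol} (assuming $\epsilon$ is small; the boundary case $\epsilon \ge |1-e^{i\pi/4}|$ is handled separately since a Clifford solution of $V$-count $0$ exists and is trivially optimal) that \emph{every} Clifford+$V$ circuit $W$ with $\norm{W-\Rz(\theta)}\le\epsilon$ has the form \eqref{eqn-u}, and by Proposition~\ref{prop-exact-clifford} its $V$-count equals the least $\sqrt{5}$-denominator exponent $k$ of $W$. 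Writing $W$ as in \eqref{eqn-u} with top-left entry $\alpha/(\rf{k}\rt{\ell})$, the computation preceding \eqref{eqn-zalpha} shows that $\norm{W-\Rz(\theta)}\le\epsilon$ is \emph{equivalent} to $\alpha/(\rf{k}\rt{\ell}) \in \Repsilon$, and unitarity forces $\alpha\da\alpha + \beta\da\beta = 5^k2^\ell$. Hence every valid circuit of $V$-count $k$ corresponds to a solution of Problem~\ref{pb-approx-synth-bis} with that same $k$.

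Next, I would invoke the correctness of the two subroutines. By Proposition~\ref{prop-algo-scaled-grid-fixed}, step~(i) of Algorithm~\ref{alg-main} enumerates \emph{every} solution $\alpha/(\rf{k}\rt{\ell}) \in \Repsilon$ in order of increasing $(k,\ell)$ — in particular in order of increasing $k$. For each such $\alpha$, in the presence of a factoring oracle, step~(ii.b) always succeeds in factoring $n = 5^k2^\ell - \alpha\da\alpha$, so step~(ii.c) is always executed; and by Proposition~\ref{prop-algo-diophantine} together with the algorithm of \cite{Rabin-Shallit}, step~(ii.c) finds a $\beta\in\Z[i]$ with $\beta\da\beta = n$ whenever one exists. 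Therefore the algorithm proceeds to step~(iii) exactly at the first enumerated $\alpha$ for which the Diophantine equation is solvable, and it outputs (via Proposition~\ref{prop-exact-clifford}) a Clifford+$V$ circuit of $V$-count equal to the least $\sqrt{5}$-denominator exponent of the resulting $U$, which is the $k$ attached to that $\alpha$.

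Finally, I would tie the two observations together. Let $k_0$ be the $V$-count of the circuit the algorithm outputs, coming from some solution $\alpha_0/(\rf{k_0}\rt{\ell_0})$. Suppose for contradiction there were a valid Clifford+$V$ circuit $W$ with $V$-count $k < k_0$. By the first paragraph, $W$ yields a solution to Problem~\ref{pb-approx-synth-bis} with exponent $(k,\ell)$ for some $0\le\ell\le2$: that is, $\alpha/(\rf{k}\rt{\ell}) \in \Repsilon$ with $\alpha\da\alpha + \beta\da\beta = 5^k2^\ell$, so in particular $n = 5^k2^\ell - \alpha\da\alpha = \beta\da\beta$ has a solution. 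But then step~(i) would have enumerated $\alpha/(\rf{k}\rt{\ell})$ strictly before $\alpha_0/(\rf{k_0}\rt{\ell_0})$ (since $k < k_0$), step~(ii) would have found the factorization and the Diophantine solution, and the algorithm would have terminated with a circuit of $V$-count $k < k_0$ — contradiction. Hence $k_0$ is minimal.

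The main obstacle — and the point to state carefully — is the precise equivalence between ``valid approximating Clifford+$V$ circuit of $V$-count $k$'' and ``solution to Problem~\ref{pb-approx-synth-bis} with that $k$'', in \emph{both} directions: the forward direction (every valid circuit gives such a solution) is what makes the enumeration exhaustive and hence rules out any smaller $V$-count, and it relies essentially on Lemma~\ref{lem-det-sol} to pin down the matrix shape \eqref{eqn-u}, on the norm computation to get the $\epsilon$-region condition as an \emph{iff}, and on Proposition~\ref{prop-exact-clifford} to identify $V$-count with the least $\sqrt 5$-denominator exponent. Everything else is bookkeeping about the order of enumeration and the unconditional success of the factoring and sum-of-two-squares steps once an oracle is granted.
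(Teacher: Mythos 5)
Your proof is correct and follows exactly the same line as the paper's: enumerate grid solutions for $\Repsilon$ in order of increasing $k$, observe that with a factoring oracle steps (ii.a)--(ii.c) never spuriously fail, and conclude the first $\alpha$ for which the Diophantine equation is solvable gives the minimal $V$\!-count. The paper's own proof is a terse three-sentence version of this same argument; you have simply spelled out the exhaustiveness and the equivalence between approximating circuits and solutions to Problem~\ref{pb-approx-synth-bis} that the paper summarizes as ``by construction.''
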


\begin{proof}
  By construction, step~(i) of the algorithm enumerates all solutions
  $\alpha$ to the scaled grid problem for $\Repsilon$ in order of
  increasing least $\sqrt 5$-denominator exponent $k$. Step~(ii.a)
  always succeeds and, in the presence of the factoring oracle, so
  does step~(ii.b). When step~(ii.c) succeeds, the algorithm has found
  a solution of Problem~\ref{pb-approx-synth-bis} for a minimal $k$.
\end{proof}

\subsubsection{Near-optimality in the absence of a factoring oracle}
\label{sssect-near-optimality}

The proof that our algorithm is nearly optimal in the absence of a
factoring oracle relies on the following number-theoretic
hypothesis. We do not have a proof of this hypothesis, but it appears
to be valid in practice.

\begin{hypothesis}
  \label{hyp-numbers}
  For each number $n$ produced in step~(ii.a) of
  Algorithm~\ref{alg-main}, write $n=2^jm$, where $m$ is odd. Then $m$
  is asymptotically as likely to be a prime congruent to 1 modulo 4 as
  a randomly chosen odd number of comparable size. Moreover, each $m$
  can be modelled as an independent random variable.
\end{hypothesis}

\begin{lemma}
  \label{lem-evolution-grid}
  Let $A$ be a bounded convex subset of $\R^2$, $k\geq 0$, and assume
  that the scaled grid problem for $A$ has at least two distinct
  solutions with $\sqrt 5$-denominator exponent $k$. Then for all
  $j\geq 0$, the scaled grid problem for $A$ has at least $5^j+1$
  solutions with $\sqrt 5$-denominator exponent $k+2j$.
\end{lemma}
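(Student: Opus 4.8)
The plan is to exhibit, from a single pair of distinct solutions at $\sqrt 5$-denominator exponent $k$, a large family of solutions at exponent $k+2j$, by multiplying by suitable elements of $\Z[i,1/\sqrt 5]$ of absolute value $1$ that have $\sqrt 5$-denominator exponent exactly $2j$. Concretely, the ring $\Z[i]$ is a PID in which $5$ factors as $5 = (2+i)(2-i)$, and the element $u = (2+i)/(2-i) = (3+4i)/5$ is a unit-norm element whose least $\sqrt 5$-denominator exponent is $1$ (since $3+4i$ is not divisible by $5$ in $\Z[i]$). More generally $u^j = (2+i)^j/(2-i)^j$ has least $\sqrt 5$-denominator exponent exactly $j$, because $(2+i)^j$ and $(2-i)^j$ are coprime in $\Z[i]$. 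However, to stay inside the \emph{scaled grid problem}, where denominators are powers of $\sqrt 5$ (not of $5$), I need the scaling factors to involve $\sqrt 5$; so I will instead work with elements of norm $1$ in $\Z[i,1/\sqrt 5]$ whose $\sqrt 5$-denominator exponent is even, and this is exactly why the statement jumps from $k$ to $k+2j$ rather than $k+j$.

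The key steps, in order. \textbf{(1)} Let $p = \alpha_1/\sqrt 5^{\,k}$ and $q = \alpha_2/\sqrt 5^{\,k}$ be the two distinct solutions for $A$ with $\sqrt 5$-denominator exponent $k$, so $\alpha_1,\alpha_2\in\Z[i]$, and after scaling both by the same power we may assume both have exponent exactly $k$ (if one had smaller exponent, bump it up by multiplying numerator and ``denominator'' — this only helps). Since $A$ is convex and $p\neq q$, the whole segment $[p,q]\subseteq A$. \textbf{(2)} For a fixed $j$, I want $5^j+1$ points of the form $\gamma/\sqrt 5^{\,k+2j}$ lying in $A$. The natural candidates are the points that lie on the line through $p$ and $q$ (hence automatically in $A$, being between $p$ and $q$) and that have the right denominator. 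Parametrize the segment as $p + t(q-p)$ for $t\in[0,1]$; I will take $t = r/5^j$ for $r = 0,1,\dots,5^j$. Each such point equals $\big(5^j\alpha_1 + r(\alpha_2-\alpha_1)\big)/\sqrt 5^{\,k+2j}$, whose numerator is a Gaussian integer, so it is a valid scaled-grid point of denominator exponent $(k+2j,0)$ — no uprightness or $\sqrt 2$ issues arise. \textbf{(3)} It remains only to check these $5^j+1$ points are \emph{distinct}, which is immediate since the map $r\mapsto p+(r/5^j)(q-p)$ is injective for $r\in\{0,\dots,5^j\}$ as $p\neq q$; and that each genuinely lies in $A$, which holds because each is a convex combination of $p,q\in A$.

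The main thing to be careful about — and where I expect the only real subtlety — is the denominator-exponent bookkeeping: I must make sure the claimed exponent $k+2j$ is an \emph{upper bound} that I am entitled to (the lemma only asserts ``solutions with $\sqrt 5$-denominator exponent $k+2j$'', and a solution of smaller exponent is a fortiori one of exponent $k+2j$ after trivial rescaling), so I do \emph{not} need to prove the exponent is exactly $k+2j$, only that each point can be written with that denominator. This sidesteps any coprimality argument about whether $5^j\alpha_1 + r(\alpha_2-\alpha_1)$ is divisible by $2+i$ or $2-i$. Thus the proof is essentially: convexity of $A$ gives the segment, and the segment contains $5^j+1$ equally spaced points with $\sqrt 5$-denominator exponent dividing into $k+2j$; the factor of $2$ in ``$2j$'' comes purely from the fact that scaled-grid denominators are $\sqrt 5^{\,k}$, and $5^j = \sqrt 5^{\,2j}$ is the smallest power of $\sqrt 5$ that is an honest integer we can divide the segment into $5^j$ parts with.
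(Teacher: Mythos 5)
Your proof is correct and takes the same approach as the paper: form the $5^j+1$ convex combinations of the two given solutions with coefficients $r/5^j$ for $r=0,\dots,5^j$, observe they lie in $A$ by convexity, and note that clearing the denominator $5^j=\sqrt 5^{\,2j}$ shows each has $\sqrt 5$-denominator exponent $k+2j$. The opening paragraph about $\Z[i]$ being a PID, the factorization $5=(2+i)(2-i)$, and multiplication by unit-norm elements is a red herring that you rightly abandon (and would not work anyway, since multiplying a solution by a unit-modulus complex number rotates it about the origin, with no reason for the result to stay in $A$); your remark that only an upper bound on the denominator exponent is needed, avoiding any coprimality argument, matches the paper's implicit reading.
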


\begin{proof}
  Let $\alpha\neq \beta$ be solutions of the scaled grid problem for
  $A$ with $\sqrt 5$-denominator exponent $k$. For each
  $\ell=0,1,\ldots,5^j$, let $\phi=\frac{\ell}{5^j}$, and consider
  $\alpha_j = \phi \alpha + (1-\phi) \beta$. Then $\alpha_j$ has
  $\sqrt 5$-denominator exponent $k+2j$. Also, $\alpha_j$ is a convex
  combination of $\alpha$ and $\beta$. Since $A$ is convex, it follows
  that $\alpha_j$ is a solution of the scaled grid problem for $A$,
  yielding $5^j+1$ distinct solutions with $\sqrt 5$-denominator
  exponent $k+2j$.
\end{proof}

\begin{lemma}
  \label{lem-sum}
  Fix an arbitrary constant $b>0$. Then for $a\geq 1$,
  \[
  \sum_{x=1}^{\infty} \bigparen{1-\frac{1}{a+b\ln x}}^x = O(a).
  \]
\end{lemma}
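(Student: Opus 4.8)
The plan is to split the sum at the point where the summand starts decaying geometrically and estimate the two pieces separately. Write $f(x) = (1 - \tfrac{1}{a+b\ln x})^x$. For small $x$, namely $x \leq a$, I would simply bound each term by $1$, contributing at most $a$ to the sum, which is already $O(a)$. The real work is in showing that the tail $\sum_{x > a} f(x)$ is also $O(a)$ (in fact I expect it to be $O(a)$ or even smaller). For $x$ in this range, I would use the elementary inequality $1 - t \leq e^{-t}$ to get $f(x) \leq \exp\bigl(-\tfrac{x}{a + b\ln x}\bigr)$, so it suffices to bound $\sum_{x>a} \exp\bigl(-\tfrac{x}{a+b\ln x}\bigr)$.

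To control this tail, the key observation is that for $x$ large the denominator $a + b\ln x$ grows only logarithmically, so the exponent $\tfrac{x}{a+b\ln x}$ eventually dominates. I would further split the tail into dyadic blocks, or more simply, I would compare against an integral: since $g(x) = \exp\bigl(-\tfrac{x}{a+b\ln x}\bigr)$ is eventually decreasing, $\sum_{x > a} g(x) \leq g(a) + \int_a^\infty g(x)\,dx$. For the integral, I would substitute or just crudely estimate: on the range $a \leq x \leq a^2$ (say) we have $a + b\ln x \leq a + 2b\ln a \leq Ca$ for a constant $C$ depending on $b$ (using $\ln a = O(a)$), so the exponent is at least $x/(Ca)$ and that portion of the integral is at most $\int_a^{\infty} e^{-x/(Ca)}\,dx = Ca\, e^{-1/C} = O(a)$. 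For $x > a^2$, I would note $a + b\ln x \leq (1 + b)\ln x$ for $x$ large, so the summand is at most $\exp\bigl(-\tfrac{x}{(1+b)\ln x}\bigr)$, whose sum over all $x$ converges to an absolute constant (independent of $a$), hence is trivially $O(a)$ since $a \geq 1$.

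Combining the three pieces — the initial segment $x \leq a$ bounded by $a$, the middle segment $a < x \leq a^2$ bounded by $O(a)$ via the geometric comparison, and the far tail $x > a^2$ bounded by an absolute constant — gives $\sum_{x=1}^\infty f(x) = O(a)$, with the implied constant depending only on $b$. The main obstacle, and the place requiring care, is the middle range $a < x \leq a^2$: one must make sure the logarithmic term $b\ln x$ in the denominator really is absorbed into a constant multiple of $a$ uniformly over this range (this uses $\ln(a^2) = 2\ln a = O(a)$ for $a \geq 1$), so that the summand genuinely decays like $e^{-x/(Ca)}$ there. Once the ranges are chosen so that this comparison is valid, the rest is routine.
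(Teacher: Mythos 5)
Your overall strategy is sound: the initial range $x\leq a$ is bounded trivially, the inequality $1-t\leq e^{-t}$ reduces the tail to estimating $\sum_{x>a}\exp\bigl(-\tfrac{x}{a+b\ln x}\bigr)$, and the right instinct is to split that tail into a ``geometric decay'' regime where the denominator is $O(a)$ and a ``far tail'' where the sum converges to an absolute constant. (The paper itself does not reprove the lemma; it cites Appendix E of the Clifford+$T$ paper, so I am judging your argument on its own.)

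There is, however, a genuine gap in where you place the split. Your far-tail bound $a+b\ln x\leq(1+b)\ln x$ is equivalent to $a\leq\ln x$, i.e.\ $x\geq e^a$, not $x>a^2$. For large $a$ the interval $a^2<x<e^a$ is nonempty (indeed it is almost all of $(a^2,\infty)$ in a logarithmic sense), and on that interval neither of your two estimates applies: the middle-range bound was only argued for $x\leq a^2$, while the far-tail bound requires $\ln x\geq a$. You hedge with ``for $x$ large,'' but the required threshold depends on $a$, which defeats the purpose of having peeled off a range depending only on $a$. The fix is simple and in fact cleaner: split at $e^a$ rather than $a^2$. For $a<x\leq e^a$ you have $\ln x\leq a$, hence $a+b\ln x\leq(1+b)a$ directly (no need for the detour through $\ln a\leq a$), and the geometric comparison $\int_a^\infty e^{-x/((1+b)a)}\,dx=(1+b)a\,e^{-1/(1+b)}=O(a)$ goes through as you intended. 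For $x>e^a$ you then legitimately have $\ln x>a$, so $a+b\ln x<(1+b)\ln x$ and $\sum_{x\geq 1}\exp\bigl(-\tfrac{x}{(1+b)\ln x}\bigr)$ converges to a constant depending only on $b$, which is $O(a)$ since $a\geq 1$. With that one change of threshold, your three-piece decomposition closes up and the argument is correct; the remaining care points (checking that $g(x)=\exp(-x/(a+b\ln x))$ is eventually decreasing so the sum--integral comparison is valid, or alternatively summing the geometric series $\sum e^{-x/(Ca)}$ directly) are routine and you flagged them appropriately.
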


\begin{proof}
  The lemma is proved in Appendix E of \cite{gridsynth}.
\end{proof}

\begin{definition}
  Let $U'$ and $U''$ be the following two solutions of the approximate
  synthesis problem
  \begin{equation}\label{eqn-u1-u2}
    U' = \zmatrix{cc}{\alpha' & -\beta'^{\dagger} \\ 
      \beta' & \alpha'^{\dagger}}
    \quad 
    \mbox{and}
    \quad
    U'' = \zmatrix{cc}{\alpha'' & -\beta''^{\dagger} \\ 
      \beta'' & \alpha''^{\dagger}}.
  \end{equation}
  $U'$ and $U''$ are said to be {\em equivalent solutions} if
  $\alpha'=\alpha''$.
\end{definition}

\begin{proposition}
  \label{prop-near-optimality}
  Let $k$ be the $V$\!-count of the solution of the approximate
  synthesis problem found by Algorithm~\ref{alg-main} in the absence
  of a factoring oracle. Then
  \begin{enumerate}[(i)]
  \item The approximate synthesis problem has at most
    $O(\log(1/\epsilon))$ non-equivalent solutions with $V$\!-count
    less than $k$.
  \item The expected value of $k$ is $k'''+O(\log(\log(1/\epsilon)))$,
    where $k', k'',$ and $k'''$ are the $V$\!-counts of the optimal,
    second-to-optimal, and third-to-optimal solutions of the
    approximate synthesis problem (up to equivalence).
  \end{enumerate}
\end{proposition}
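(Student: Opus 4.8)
The plan is to analyze Algorithm~\ref{alg-main} from the point of view of how many candidate points $\alpha'=\alpha/(\rf{k}\rt{\ell})$ the enumeration examines before a Diophantine solution is found, and to relate this count to both the number of low-$V$-count exact solutions and to the probabilistic behavior governed by Hypothesis~\ref{hyp-numbers}. For part~(i), I would first observe that a non-equivalent solution with $V$-count $k''$ corresponds precisely to a point $\alpha'\in\Repsilon$ of least $\sqrt5$-denominator exponent $k''$ for which the Diophantine equation $\beta\da\beta = 5^{k''}2^\ell - \alpha\da\alpha$ is solvable; in particular each such solution contributes at least one solution to the scaled grid problem for $\Repsilon$ of exponent at most $k''$. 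So it suffices to bound the total number of scaled-grid solutions for $\Repsilon$ with exponent $\leq k$. The key geometric input is that $\area(\Repsilon) = \Theta(\epsilon^2)$ (it is a circular segment of the unit disk cut off at distance $1-\epsilon^2/2$ from the center), so the number of scaled-grid points of fixed exponent $(k,\ell)$ inside $\Repsilon$ is, up to lower-order error from the boundary, proportional to $5^k 2^\ell \area(\Repsilon)$. Summing a geometric series in $k$ up to the bound $k$ at which the first solution appears, and using that $5^{k}\epsilon^2 = O(1)$ until $k \approx \log_5(1/\epsilon^2)$, I would get a count of $O(\log(1/\epsilon))$ non-equivalent solutions with $V$-count less than $k$. (One must be a little careful that $k$ itself is not astronomically larger than $\log_5(1/\epsilon)$; this is exactly what part~(ii) controls, and one can use part~(ii) to justify that the relevant range of exponents is $O(\log_5(1/\epsilon))$ in expectation, or argue directly via Lemma~\ref{lem-evolution-grid} that solutions proliferate geometrically so that the enumeration cannot stall too long.)

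For part~(ii), the strategy is to condition on the three smallest exact solutions, whose $V$-counts are $k'\leq k''\leq k'''$, and to show that once the enumeration reaches exponent $k'''$ it very quickly (in expectation) finds a successful point. By Lemma~\ref{lem-evolution-grid} applied with the two distinct solutions $\alpha',\alpha''$ of exponent $\leq k''$ (convex-combining them and noting they are both solutions of the scaled grid problem for $\Repsilon$ once we pass to a common exponent), the scaled grid problem for $\Repsilon$ has at least $5^j+1$ solutions with $\sqrt5$-denominator exponent $k''+2j$, hence a fortiori at exponent $k'''+2j$. Under Hypothesis~\ref{hyp-numbers}, for each such candidate $\alpha$ the associated odd part $m$ of $n = 5^{k}2^\ell - \alpha\da\alpha$ behaves like a random odd number of size $O(5^{k})$, so it is prime $\equiv 1 \pmod 4$ — and hence yields a Diophantine solution via Proposition~\ref{prop-algo-diophantine} — with probability $\Theta(1/\ln(5^{k})) = \Theta(1/k)$ by the prime number theorem in arithmetic progressions. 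Treating these events as independent, the probability that \emph{none} of the $5^j+1$ candidates at exponent $k'''+2j$ succeeds is at most $(1 - c/(k'''+2j))^{5^j}$, which I would bound by something like $(1 - 1/(a + b\ln(5^j)))^{5^j}$ after absorbing constants and the additive $k'''$ term; summing over $j$ and invoking Lemma~\ref{lem-sum} (with $x = 5^j$, up to reindexing the sum over $j$ versus over $x$) gives that the expected number of extra steps, and hence the expected excess $\mathbb{E}[k] - k'''$, is $O(\log k''') = O(\log\log(1/\epsilon))$, using that $k''' = O(\log(1/\epsilon))$ — which itself follows from the fact that for $k$ around $\log_5(4/\epsilon^2)$ the region $\Repsilon$ already contains $\Omega(1)$ scaled-grid points at each $\ell$ and the density of Diophantine-solvable points among these is bounded below.

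The main obstacle, and the place where the argument is genuinely delicate rather than routine, is making the passage from Lemma~\ref{lem-sum} rigorous: one must set up the correspondence between "the enumeration has reached exponent $k'''+2j$" and "at least $5^j$ independent Bernoulli trials with success probability $\gtrsim 1/(a+b\ln 5^j)$ have been performed" carefully, because the same point $\alpha$ can occur with different $\ell$, equivalent solutions must not be double-counted, and the convex-combination points produced by Lemma~\ref{lem-evolution-grid} need to be verified to be genuinely new scaled-grid solutions (not coinciding with each other or with earlier ones) so that the trials really are $5^j+1$ in number. Once that bookkeeping is in place, part~(ii) is a direct application of Lemma~\ref{lem-sum}, and part~(i) follows from the same geometric area estimate used as a byproduct. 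I would also need the mild point that $k$ is finite with probability $1$ (the algorithm terminates), which again follows from Lemma~\ref{lem-evolution-grid} plus Hypothesis~\ref{hyp-numbers}: infinitely many independent trials with success probabilities whose reciprocals grow only logarithmically will succeed almost surely, since $\sum_j 5^j/(a+b\ln 5^j) = \infty$.
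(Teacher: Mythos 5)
Your part~(ii) follows the paper's route in outline (Lemma~\ref{lem-evolution-grid} for proliferation of grid candidates, Hypothesis~\ref{hyp-numbers} plus the prime number theorem for the per-candidate success probability, Lemma~\ref{lem-sum} to sum the failure probabilities, then Jensen), but your part~(i) relies on an area-counting argument that does not work.

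\textbf{The area argument for part~(i) is broken.} First, the area of $\Repsilon$ is $\Theta(\epsilon^3)$, not $\Theta(\epsilon^2)$: the segment has sagitta $\epsilon^2/2$ \emph{and} chord length $2\sqrt{1-(1-\epsilon^2/2)^2}\approx 2\epsilon$, so $\area(\Repsilon)\approx \tfrac{2}{3}\cdot 2\epsilon\cdot \tfrac{\epsilon^2}{2}=\Theta(\epsilon^3)$. More seriously, even with the right area, the number of scaled-grid points of exponent $k'$ in $\Repsilon$ is of order $5^{k'}\epsilon^3$, and at the relevant exponent $k_3\approx 4\log_5(2/\epsilon)$ (the bound the paper derives from the inscribed disk of radius $\epsilon^2/4$) this is $\approx (2/\epsilon)^4\cdot\epsilon^3=\Theta(1/\epsilon)$, not $O(1)$. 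So ``summing a geometric series and bounding each term by $O(1)$'' cannot give $O(\log(1/\epsilon))$; the total count of grid points with exponent $\leq k$ is $\Theta(1/\epsilon)$, vastly more than $O(\log(1/\epsilon))$. You also acknowledge the circularity of using part~(ii) inside part~(i). The paper avoids all of this by \emph{not} counting grid points geometrically: it defines $j_0$ as the index of the first candidate $\alpha_{j_0}$ for which the odd part $m_{j_0}$ is a prime $\equiv 1 \pmod 4$, observes that every non-equivalent solution of $V$\!-count $<k=k_{j_0}$ has top-left entry $\alpha$ of denominator exponent $<k_{j_0}$ and hence $\alpha=\alpha_j$ for some $j<j_0$, so $s<j_0$, and then bounds $E(j_0)=O(k_3)=O(\log(1/\epsilon))$ via Lemma~\ref{lem-sum}. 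This is the bound on the \emph{index}, not on the number of geometric grid points, and it is far sharper. You should replace the area argument with this injection $s<j_0$.

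\textbf{Part~(ii) has a parity gap.} You apply Lemma~\ref{lem-evolution-grid} to the two solutions $\alpha',\alpha''$ ``once we pass to a common exponent,'' but promotion of $\sqrt5$-exponents is only by steps of $2$, so this requires $k'\equiv k''\pmod 2$, which need not hold. The paper sidesteps this by working with the first three \emph{candidates} $\alpha_1,\alpha_2,\alpha_3$ (not the first three solutions) and invoking the pigeonhole principle: two of $k_1,k_2,k_3$ are congruent modulo~$2$, say $k_2\equiv k_3$, and then $\alpha_2,\alpha_3$ are two distinct solutions of the scaled grid problem at exponent $k_3$, which feeds into Lemma~\ref{lem-evolution-grid}. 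One then uses $k_3\leq k'''$ (since the first three candidates all have index at most the index of the third non-equivalent solution). The resulting bound $k_j\leq k_3+2(1+\log_5 j)$ on all candidate exponents is what gives the lower bound on the success probabilities, and the final step $E(k)\leq k'''+2+2\log_5 E(j_0)$ cleanly uses concavity of $\log$; your ``sum over $j$ of $(1-c/(k'''+2j))^{5^j}$'' is a workable alternative but note you are then bounding the expected \emph{excess exponent} directly, whereas the paper bounds $E(j_0)$ and passes through $\log_5$, which is tidier and exactly where Lemma~\ref{lem-sum} applies. Adding the pigeonhole step and switching the index from solutions to candidates would repair part~(ii).
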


\begin{proof}
  If $\epsilon\geq |1-e^{i\pi/4}|$, then by Lemma~\ref{lem-det-sol}
  there is a solution of $V$\!-count 0 and the algorithm easily finds
  it.  In this case there is nothing to show, so assume without loss
  of generality that $\epsilon<|1-e^{i\pi/4}|$. Then by
  Lemma~\ref{lem-det-sol}, all solutions are of the form
  (\ref{eqn-u}).
  \begin{enumerate}[(i)]
  \item Consider the list $\alpha_1,\alpha_2,\ldots$ of candidates
    generated in step~(i) of the algorithm. Let $k_1,k_2,\ldots$ be
    their least $\sqrt 5$-denominator exponent and let
    $n_1,n_2,\ldots$ be the corresponding integers calculated in
    step~(ii.a). Note that $n_j\leq 4\cdot 5^{k_j}$ for all $j$. Write
    $n_j=2^{z_j}m_j$ where $m_j$ is odd. By
    Hypothesis~\ref{hyp-numbers}, the probability that $m_j$ is a
    prime congruent to 1 modulo 4 is asymptotically no smaller than
    that of a randomly chosen odd integer less than $4\cdot 5^{k_j}$,
    which, by the well-known prime number theorem, is
    \begin{equation}\label{eqn-pj}
      p_j:=\frac{1}{\ln (4\cdot 5^{k_j})} 
      = \frac{1}{k_j\ln 5 + \ln 4}.
    \end{equation}
    By the pigeon-hole principle, two of $k_1,k_2,$ and $k_3$ must be
    congruent modulo 2. Assume without loss of generality that
    $k_2\equiv k_3 ~(\mymod 2)$. Then $\alpha_2$ and $\alpha_3$ are
    two distinct solutions to the scaled grid problem for $\Repsilon$
    with (not necessarily least) denominator exponent $k_3$. It
    follows by Lemma~\ref{lem-evolution-grid} that there are at least
    $5^r+1$ distinct candidates of denominator exponent $k_3+2r$, for
    all $r\geq 0$. In other words, for all $j$, if $j\leq 5^r+1$, we
    have $k_j\leq k_3+2r$.  In particular, this holds for
    $r=\floor{1+\log_5 j}$, and therefore,
    \begin{equation}\label{eqn-kj}
      k_j\leq k_3+2(1+\log_5 j).
    \end{equation}
    Combining {\eqref{eqn-kj}} with {\eqref{eqn-pj}}, we have
    \begin{equation}\label{eqn-pj2}
      p_j \geq \frac{1}{(k_3+2(1+\log_5 j))\ln 5 + \ln 4}
      = \frac{1}{(k_3+2)\ln 5 + 2\ln j + \ln 4}
    \end{equation}
    Let $j_0$ be the smallest index such that $m_{j_0}$ is a prime
    congruent to 1 modulo 4.  By Hypothesis~\ref{hyp-numbers}, we can
    treat each $m_j$ as an independent random variable. Therefore,
    \begin{eqnarray}
      P(j_0 > j) &=& P(\mbox{$n_1,\ldots,n_j$ are not prime})
      \nonumber\\
      &\leq & (1-p_1)(1-p_2)\cdots(1-p_j) \nonumber\\
      &\leq & (1-p_j)^j \nonumber\\
      &\leq & 
      \bigparen{1-\frac{1}{(k_3+2)\ln 5 + 2\ln j + \ln 4}}^j.         
      \nonumber
    \end{eqnarray}
    The expected value of $j_0$ is
    \begin{equation}\label{eqn-ej0}
      E(j_0) ~=~ \sum_{j=0}^{\infty} P(j_0 > j)
      ~\leq~ 1+\sum_{j=1}^{\infty} 
      \bigparen{1-\frac{1}{(k_3+2)\ln 5 + 2\ln j + \ln 4}}^j
      ~=~ O(k_3),
    \end{equation}
    where we have used Lemma~\ref{lem-sum} to estimate the sum.
          
    Next, we will estimate $k_3$. First note that if the $\epsilon$
    region contains a circle of radius greater than $1/\rf{k}$, then
    it contains at least 3 solutions to the scaled grid problem for
    $\Repsilon$ with denominator exponent $k$. The width of the
    $\epsilon$-region $\Repsilon$ is $\epsilon^2/2$ at the widest
    point, and we can inscribe a disk of radius $r={\epsilon^2}/{4}$
    in it. Hence the scaled grid problem for $\Repsilon$, as in
    step~(i) of the algorithm, has at least three solutions with
    denominator exponent $k$, provided that
    \[
    r = \frac{\epsilon^2}{4} \geq \frac{1}{\rf k},
    \]
    or equivalently, provided that
    \[
    k\geq 2 \log_5(2) + 2\log_5 ({1}/{\epsilon}).
    \]
    It follows that
    \begin{equation}\label{eqn-k2}
      k_3=O(\log({1}/{\epsilon})),
    \end{equation}
    and therefore, using {\eqref{eqn-ej0}}, also
    \begin{equation}\label{eqn-ej0-eps}
      E(j_0)=O(\log({1}/{\epsilon})).
    \end{equation}
          
    To finish the proof of part~(i), recall that $j_0$ was defined to
    be the smallest index such that $m_{j_0}$ is a prime congruent to
    1 modulo 4. The primality of $m_{j_0}$ ensures that step~(ii.b) of
    the algorithm succeeds for the candidate
    $\alpha_{j_0}$. Furthermore, because $m_{j_0}\equiv 1\mmod{4}$,
    the equation $\beta\da \beta=n$ has a solution by
    Proposition~\ref{prop-algo-diophantine}. Hence the remaining steps
    of the algorithm also succeed for $\alpha_{j_0}$.

    Now let $s$ be the number of non-equivalent solutions of the
    approximate synthesis problem of $V$\!-count strictly less than
    $k$. As noted above, any such solution $U$ is of the form
    {\eqref{eqn-u}}. Then the least denominator exponent of $\alpha$
    is strictly smaller than $k_{j_0}$, so that $\alpha=\alpha_j$ for
    some $j<j_0$. In this way, each of the $s$ non-equivalent
    solutions is mapped to a different index $j<j_0$. It follows that
    $s<j_0$, and hence that $E(s)\leq E(j_0)=O(\log({1}/{\epsilon}))$,
    as was to be shown.
  \item Let $U'$ be an optimal solution of the approximate synthesis
    problem, let $U''$ be optimal among the solutions that are not
    equivalent to $U'$ and let $U'''$ be optimal among the solutions
    that are not equivalent to either $U'$ or $U''$. Assume that $U',
    U'',$ and $U'''$ are written as in (\ref{eqn-u1-u2}) with top-left
    entry $\alpha',\alpha'',$ and $\alpha'''$ respectively. Now let
    $k'$, $k''$, and $k'''$ be the least denominator exponents of
    $\alpha'$, $\alpha''$, and $\alpha'''$, respectively. Let $k_3$
    and $j_0$ be as in the proof of part~(i). Note that, by
    definition, $k_3\leq k'''$.  Let $k$ be the least denominator
    exponent of the solution of the approximate synthesis problem
    found by the algorithm. Then $k\leq k_{j_0}$. Using
    {\eqref{eqn-kj}}, we have
    \[
    k \leq k_{j_0} \leq k_3 + 2(1+\log_5 j_0) \leq k''' + 2(1+\log_5
    j_0).
    \]
    This calculation applies to any one run of the algorithm.  Taking
    expected values over many randomized runs, we therefore have
    \begin{equation}\label{eqn-em}
      E(k) \leq k''' + 2 + 2 E(\log_5 j_0) 
      \leq k''' + 2 + 2\log_5 E(j_0).
    \end{equation}
    Note that we have used the law $E(\log j_0)\leq \log(E(j_0))$,
    which holds because $\log$ is a concave function. Combining
    {\eqref{eqn-em}} with {\eqref{eqn-ej0-eps}}, we therefore have the
    desired result:
    \begin{equation}
      E(k) = k''' + O(\log(\log(1/\epsilon))). \nonumber
    \end{equation}\qedhere
  \end{enumerate}
\end{proof}

\subsubsection{Time complexity}
\label{sssect-complexity}

\begin{proposition}
  Algorithm~\ref{alg-main} runs in expected time
  $O(\polylog(1/\epsilon))$. This is true whether or not a
  factorization oracle is used.
\end{proposition}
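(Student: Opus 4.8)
The plan is to account for the total expected running time by bounding (a) the number of candidates $\alpha_j$ that the algorithm examines before it halts, and (b) the cost of processing each candidate, and then multiplying these together, adding in the one-time cost of setting up the grid problem. The key quantitative input is the analysis already carried out in the proof of Proposition~\ref{prop-near-optimality}: the halting index $j_0$ has expected value $O(\log(1/\epsilon))$ by \eqref{eqn-ej0-eps}, and the $V$\!-count $k$ of the solution found satisfies $k \le k_{j_0} \le k''' + 2(1+\log_5 j_0)$, so that $E(k) = O(\log(1/\epsilon))$ as well. In fact, for every candidate we reach, the associated parameters $k_j$ satisfy $k_j \le k_{j_0} = O(\log(1/\epsilon))$ on the run in question, so all integers $n_j = 5^{k_j}2^{\ell_j} - \alpha_j\da\alpha_j$ have bit-length $O(\log(1/\epsilon))$.

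First I would handle the setup cost. By Lemma~\ref{lem-det-sol} we may assume $\epsilon < |1-e^{i\pi/4}|$, and the $\epsilon$-region $\Repsilon$ is given (in the sense of Remark~\ref{rem-ellipse}) by an enclosing ellipse computed directly from $\theta$ and $\epsilon$. The width of $\Repsilon$ is $\epsilon^2/2$, so its enclosing ellipse has uprightness $M = \Omega(\epsilon^2)$, whence $\log(1/M) = O(\log(1/\epsilon))$. By Proposition~\ref{prop-algo-scaled-grid-fixed} the grid enumeration then requires $O(\log(1/\epsilon))$ arithmetic operations up front, plus a constant number per solution produced; since each arithmetic operation is on numbers of size $\poly(\log(1/\epsilon))$, this is $\polylog(1/\epsilon)$ bit-time overall plus $\polylog(1/\epsilon)$ per candidate.

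Next I would bound the per-candidate cost. For each of the $j_0 = O(\log(1/\epsilon))$ candidates examined, the algorithm (step~(ii.a)) forms $n_j$ in $\polylog(1/\epsilon)$ time; (step~(ii.b)) attempts to factor $n_j$ — in the absence of an oracle, using a probabilistic factoring routine whose expected running time is subexponential, in particular $\polylog(n_j) = \polylog(1/\epsilon)$ when we only demand, as the analysis of Proposition~\ref{prop-near-optimality} does, that the odd part $m_j$ be recognized when it is itself prime (a single probabilistic primality test suffices, running in $\polylog(1/\epsilon)$ time); and (step~(ii.c)) invokes the Rabin–Shallit algorithm of Section~\ref{ssect-dioph}, which given the factorization runs in expected $\polylog(n_j) = \polylog(1/\epsilon)$ time. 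Finally, step~(iii) applies the exact synthesis procedure of Proposition~\ref{prop-exact-clifford}, which uses $O(k) = O(\log(1/\epsilon))$ arithmetic operations on numbers of size $\polylog(1/\epsilon)$, hence $\polylog(1/\epsilon)$ time. Multiplying the $O(\log(1/\epsilon))$ candidates by the $\polylog(1/\epsilon)$ expected cost each, and adding the setup cost, gives total expected running time $O(\polylog(1/\epsilon))$. With a factoring oracle the only change is that step~(ii.b) becomes a single oracle call, which does not affect the bound.

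The main obstacle is step~(ii.b): the classical factoring subroutine has no polynomial-time guarantee, so a naive accounting would blow up the bound. The resolution — already implicit in Hypothesis~\ref{hyp-numbers} and the proof of Proposition~\ref{prop-near-optimality} — is that we never actually need to factor a hard number: the algorithm only needs to detect that the odd part $m_j$ is prime (and congruent to $1$ modulo $4$), which a randomized primality test does in polynomial time; factoring is attempted only to keep the algorithm correct, and may simply be abandoned (skipping that candidate) whenever it is slow. Concretely, one caps the factoring attempt on $n_j$ at $\polylog(1/\epsilon)$ steps and moves on if it does not terminate; by Hypothesis~\ref{hyp-numbers} the candidate $\alpha_{j_0}$ that the algorithm eventually uses has $m_{j_0}$ prime, so trial division together with the primality test disposes of it quickly, and the expected-index bound $E(j_0) = O(\log(1/\epsilon))$ is unaffected. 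A secondary point worth a sentence is that $E$ of a product is the product of expectations here only because the per-candidate work is bounded by a fixed $\polylog(1/\epsilon)$ independent of the random choices, so no subtle correlation between $j_0$ and the per-step cost arises.
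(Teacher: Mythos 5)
Your proposal is correct and follows essentially the same route as the paper, which simply defers to the corresponding proof in \cite{gridsynth}: bound the one-time grid setup by $O(\log(1/M)) = O(\log(1/\epsilon))$ via uprightness, bound the expected number of candidates by $E(j_0) = O(\log(1/\epsilon))$ from \eqref{eqn-ej0-eps}, observe that each candidate costs only $\polylog(1/\epsilon)$ bit operations because step~(ii.b) need only perform a capped factoring attempt plus a primality test on the odd part rather than a full factorization, and add the $O(k)$-arithmetic-operation cost of exact synthesis. Your explicit resolution of the factoring bottleneck (cap the attempt, rely on Hypothesis~\ref{hyp-numbers} to guarantee a prime $m_{j_0}\equiv 1\pmod 4$ is reached quickly) is precisely the mechanism used in \cite{gridsynth}, and your note that the per-candidate cost is deterministically bounded so that $E$ factors cleanly is a correct and useful clarification.
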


\begin{proof}
  This proposition is proved like the corresponding one in
  \cite{gridsynth}.
\end{proof}

\section{Conclusion}

We have introduced an algorithm for the approximate synthesis of
$z$-rotations into Clifford+$V$ circuits. Our algorithm is optimal if
an oracle for the factorization of integers is available. In the
absence of such an oracle, our algorithm is still nearly optimal,
yielding circuits of $V$\!-count $m + O(\log(\log(1/\epsilon)))$,
where $m$ is the $V$\!-count of the third-to-optimal solution. We have
also described an algorithm for the approximate synthesis of
$z$-rotations into Pauli+$V$ circuits. To the author's knowledge,
these algorithms are the first optimal synthesis algorithms for
extensions of the $V$\!-gates.

\section*{Acknowledgements}

The author would like to thank Peter Selinger and Kira Scheibelhut for
their helpful comments.

\bibliographystyle{abbrv} 
\bibliography{vsynth}

\end{document}